\newcommand{\nceil}[1]{\lceil #1\rceil}
\newcommand{\nsset}[1]{\{#1\}}
\newcommand{\nabs}[1]{\lvert #1\rvert}
\newcommand{\nfloor}[1]{\lfloor #1\rfloor}
\newcommand{\Z}{\ensuremath{\mathbb{Z}}\xspace}
\newcommand{\calP}{\ensuremath{\mathcal{P}}\xspace}
\newcommand{\calS}{\ensuremath{\mathcal{S}}\xspace}
\newcommand{\calL}{\ensuremath{\mathcal{L}}\xspace}
\newcommand{\word}[1]{\ensuremath{\mathtt{#1}}\xspace}
\theoremstyle{plain}
\newtheorem{thm}{Theorem}
\newtheorem{lem}[thm]{Lemma}
\theoremstyle{definition}
\newtheorem{defn}[thm]{Definition}
\begin{document}
\usetikzlibrary{decorations.pathreplacing}

\title{A new geometric approach to Sturmian words}

\author{Kaisa Matom\"aki}
\author{Kalle Saari}

\address{Department of Mathematics \\ University of Turku \\ 20014 Turku \\ Finland}
\email{ksmato@utu.fi}
\email{kasaar@utu.fi}
\thanks{The research of the first author was supported by grant no.\ 137883 from the Academy of Finland and the research of the second author by grant no.\ 134190 from the Academy of Finland.}

\begin{abstract}
We introduce a new geometric approach to Sturmian words by means of a mapping that associates certain lines in the $n\times n$-grid and sets of finite Sturmian words of length~$n$.
Using this mapping, we give new proofs of the formulas enumerating the finite Sturmian words and the palindromic finite Sturmian words of a given length.
We also give a new proof for the well-known result that a factor of a Sturmian word has precisely two return words.
\end{abstract}
\keywords{Sturmian word, geometric approach, palindrome, enumeration, return word}
\maketitle

\section{Introduction}
\label{se:intro}
An infinite binary word is \emph{Sturmian} if it has exactly $n+1$ factors of length $n$ for all integers $n\geq 0$. Factors of Sturmian words are called \emph{finite Sturmian words}. For definitions and notation, see Lothaire~\cite{Lothaire2002}. 

The purpose of this note is to introduce a new geometric approach to Sturmian words by means of a mapping that associates certain lines in the $n\times n$-grid and sets of finite Sturmian words of length~$n$. We demonstrate the power of the approach by giving new proofs for enumeration formulas for finite Sturmian words and palindromes as well as for the fact that a factor of Sturmian word has precisely two returns. We believe that this new approach can be used to study also other properties of Sturmian words. 

Let us describe the stated applications more precisely before going to the details of the geometric approach in Section~\ref{mapping}. Let $A_{n}$ denote the set of finite Sturmian words of length $n\geq 0$. The formula
\begin{equation}\label{011120101513}
\nabs{A_{n}} = 1 + \sum_{k=1}^{n}(n+1 - k)\varphi(k),
\end{equation}
where  $\varphi$ is Euler's totient function, 
seems to have been proven first by Lipatov~\cite{Lipatov1982}. Subsequently, proofs have been given by Mignosi~\cite{Mignosi1991}  by means of Farey sequences,
by Berstel and Pocchiola~\cite{BerPoc1993} with geometric arguments, and de Luca and Mignosi~\cite{deLMig1994} with combinatorial arguments. 
In Section~\ref{finite} we give an alternate geometric proof of~\eqref{011120101513} based on the setting introduced in Section \ref{mapping}. Our approach partially parallels to that of Berstel and Pocchiola, but is somewhat more elementary;
in particular, it avoids the use of Euler's formula that relates the number of edges, faces, and vertices of a planar graph.

It was shown by de Luca and De Luca~\cite{deLDeL2006}, using combinatorial arguments, that the number of palindromic finite Sturmian words  of length~$n$ is
\begin{equation}\label{10062011}
1+ \sum_{k=0}^{\nceil{n/2}-1} \varphi(n-2k).
\end{equation} 
Applying the new geometric approach, we give a new proof of~\eqref{10062011} in Section~\ref{palindrome}.

Let $\omega$ be an infinite word and $u$ its factor. A word $v$ is called a \emph{return word of $u$ in $\omega$} if $vu$ occurs in $\omega$ and it has precisely two occurrences of $u$ in it: one as a prefix and the other as a suffix. It is well-known that each factor of an infinite Sturmian word has exactly two return words, see \cite{JusVui2000,Vuillon2001}. In Section~\ref{return}, we give a new proof for this result, as well.

\section{A mapping relating lines and Sturmian words} \label{mapping}

We start by relating Sturmian words to lines: First we will describe a well-known geometric interpretation of Sturmian words and then utilize it to find a mapping which relates every finite Sturmian word of length $n$ with a line which has at least two integer points in $n \times n$-grid. 

An infinite word $\omega = a_{0} a_{1} a_{2} \cdots a_{k} \cdots $ with $a_{k} \in \{ \mathtt{0}, \mathtt{1} \}$ is Sturmian if and only if there exist an irrational number $\alpha \in (0,1)$ and a real number $\rho$ such that 
\[
a_{k} = \nfloor{(k+1)\alpha + \rho} - \nfloor{k\alpha + \rho}
\]
for all $k\geq 0$ (see \cite[Ch.~2]{Lothaire2002}). 
Since every finite Sturmian word is a prefix of an infinite Sturmian word, 
this implies that a finite word $w=a_{0} a_{1} a_{2} \cdots a_{n-1}$ with $a_{k} \in \{ \mathtt{0}, \mathtt{1}\}$ is a finite Sturmian word if and only if there exist real numbers $\alpha\in (0,1)$ and  $\rho\in (0,1)$ such that 
\begin{equation}
\label{eq:akdef}
a_{k} = \nfloor{(k+1)\alpha + \rho} - \nfloor{k\alpha + \rho} \qquad (0 \leq k \leq n-1).
\end{equation}
Note that, while it is clear that one can restrict to $\rho \in [0, 1)$, in case of finite Sturmian words one can indeed restrict to $\rho \in (0, 1)$: If a word $w$ is obtained from \eqref{eq:akdef} with $\rho = 0$ and some $\alpha \in (0,1)$, one can increase $\rho$ slightly to make it positive; If this change is small enough, the letters $a_{k}$ are unaffected for $0 \leq k \leq n-1$. Furthermore, unlike in the case of an infinite Sturmian word, here $\alpha$ may also be rational because in that case one can increase $\alpha$ slightly to make it irrational: Again, if this change is small enough, the letters $a_{k}$ are unaffected for $0 \leq k \leq n-1$, and so the obtained $w$ is indeed a finite Sturmian word. 

The letter $a_{k}$ can be determined geometrically as follows. Consider the line~$\ell$ given by the equation $y = \alpha x + \rho$ drawn in the  integer grid $[0,n] \times [0,n]$ (or simply $n\times n$-grid). 
The letter $a_{k}$ equals $\mathtt{1}$ precisely when $\ell$ intersects  a horizontal bar (or a grid point) at some $x \in (k, k+1]$; see Figure~\ref{041120101629} for an example.  
Thus any finite Sturmian word $a_{0} a_{1} a_{2} \cdots a_{n-1}$ may be identified with a broken line starting from $(0,0)$ in the $n \times n$-grid
such that $a_{k}$ corresponds to a diagonal line if $a_{k} = \word{1}$ and a horizontal line if $a_{k} = \word{0}$;  again, consult Figure~\ref{041120101629} for an example.

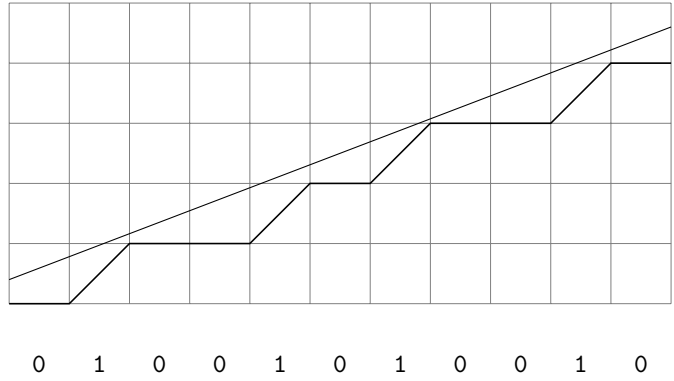
\begin{figure}[h]
 \centering
 \begin{tikzpicture}[scale = 0.8]
 \draw[step=1cm,gray,very thin] (0,0) grid (11,5); 
 \draw (0,0.4) -- (11, 4.602);
 \draw (0.5, -1) node {\word{0}};
 \draw (1.5, -1) node {\word{1}};
 \draw (2.5, -1) node {\word{0}};
 \draw (3.5, -1) node {\word{0}};
 \draw (4.5, -1) node {\word{1}};
 \draw (5.5, -1) node {\word{0}};
 \draw (6.5, -1) node {\word{1}};
 \draw (7.5, -1) node {\word{0}};
 \draw (8.5, -1) node {\word{0}}; 
 \draw (9.5, -1) node {\word{1}}; 
 \draw (10.5, -1) node {\word{0}};
 \draw[semithick] (0, 0.0) -- (1, 0) -- (2, 1) -- (3, 1) -- (4, 1) -- (5, 2) -- (6, 2) -- (7, 3) -- (8, 3) -- (9, 3) -- (10, 4) -- (11, 4);
 \end{tikzpicture}
 \caption{Identifying a finite Sturmian word with a broken line. Here $\alpha = (3- \sqrt{5})/2$, $\rho = 2/5$, and $n=11$.}\label{041120101629}
 \end{figure}

In what follows, we identify a finite Sturmian word $w=a_{0} a_{1} a_{2} \cdots a_{n-1}$ with a broken line as explained above.
Hence we may say that \emph{$w$ goes through the point $p = (i,j)$}, where $i$ and $j$ are integers with $0\leq i,j \leq n$, if the number of
$\word{1}$s in the first $i$ letters of $w$ is $j$. For example, the word in Figure~\ref{041120101629} goes through the point $(7,3)$. 
We will often use $i$ and $j$ to denote the coordinates of a point~$p$ and thus write $p_1 = (i_1, j_1)$, for example.

Let $\calS_n$ be the set of all lines in the $n\times n$-grid with equations of the form
\[
y = \alpha x +\rho \quad \text{with} \quad \alpha \in (0,1) \quad\text{and} \quad \rho \in (0,1).
\]
It follows from the identification of finite Sturmian words with broken lines that, on the one hand, each line in $\calS_n$ defines a finite Sturmian word of length $n$ (even if $\alpha$ is rational) and, on the 
other hand, each word is determined by some lines in $\calS_n$. 

Next we will define another set of lines in the $n \times n$-grid. For any line $\ell$ and real number $k$, let $\Z_k(\ell)$ be the number of integer points
$(i,j) \in \Z \times \Z$ on the line with $0 \leq i \leq k$. Now we let $\calL_n$ be the set of
all lines 
\[
y = \alpha x +\rho \quad \text{with} \quad \alpha \in [0,1], \quad \rho \in [0,1], \quad \text{and} \quad \Z_n(\ell) \geq 2.
\]

Informally speaking, we will assosiate to each $\ell \in \calL_n$ the set of those Sturmian words that are defined by lines $\ell' \in \calS_n$ that go just above more than half of the integer points of $\ell$ and just below the rest.  By ``just above a point'' (or just below) we mean that $\ell'$ goes so little above (or below) the point that there are no integer points between $\ell$ and $\ell'$ anywhere in the grid. For a formal definition recall that $A_n$ denotes the set of finite Sturmian words of length $n$ and write $2^{A_n}$ for the power set of $A_n$.
\begin{defn}
Let $m \colon \calL_n \to  2^{A_n}$ be the mapping for which one has $w \in m(\ell)$ (with $\ell \in \calL_n$ and $w \in A_n$) if and only if there exists a line $\ell' \in \calS_n$ such that
\begin{enumerate}[(i)] 
\item The line $\ell'$ defines the word $w$;
\item There are no grid points between the lines $\ell$ and $\ell'$;
\item The line $\ell'$ goes above two integer points $p_1, p_2 \in \ell$ with $i_1 \leq n/2 < i_2$. 
\end{enumerate}
\end{defn}

% Define a mapping $m \colon \calL_n \to  2^{A_n}$, where the range is the power set of $A_{n}$, as follows: 
% The image $m(\ell)$ of $\ell \in \calL_n$  is the set of all those Sturmian words that are defined by a line in $\calS_n$ that goes just above some integer points $p_1, p_2 \in \ell$ with 
% $i_1 \leq n/2 < i_2$. Here ``just above'' means that if $\ell ' \in \calS_n$ is a line corresponding to $w\in m(\ell)$, then there are no grid points in between $\ell$ and $\ell'$.
% In particular the broken line of the word $w$ goes through $p_1$ and $p_2$. Notice that, in general, the choice of
% $p_1$ and $p_2$ is not unique.  So we essentially map $\ell \in \calL_{n}$ to the set of those Sturmian words that
% are defined by lines in $\calS_n$ that go just above more than half of
% the integer points of $\ell$ and just below the rest. 

For example, if $\ell \in \calL_{10}$ is given by  $y = \frac{1}{2} x + \frac{1}{2}$, then
\[
m(\ell) = \nsset{\word{1010101010}, \word{0110101010}, \word{0101101010}, \word{1010101001}}.
\]
The word $\word{1010101001}$ is given by the line $l' \in \calS_{10}$ depicted in Figure~\ref{091120101629}.
In this case we may choose $p_{1} = (3,2) $ and $p_{2} = (7,4)$.
 \begin{figure}[h]
 \centering
 \begin{tikzpicture} [scale=0.8]
 \draw[step=1cm,gray,very thin] (0,0) grid (10, 6); 
\draw (0,  0.5) -- (10, 5.5);
\draw (0,  0.95) -- (10, 5.4);
\draw (0, 0.5) node[anchor=east]{$\ell$};
\draw (0, 0.95) node[anchor=east]{$\ell'$};
 \filldraw (3, 2) circle (2pt); 
 \filldraw (7, 4) circle (2pt); 
 \draw (0.5,-0.5) node {\word{1}};
 \draw (1.5,-0.5) node {\word{0}};
 \draw (2.5,-0.5) node {\word{1}};
 \draw (3.5,-0.5) node {\word{0}};
 \draw (4.5,-0.5) node {\word{1}};
 \draw (5.5,-0.5) node {\word{0}};
 \draw (6.5,-0.5) node {\word{1}};
 \draw (7.5,-0.5) node {\word{0}};
 \draw (8.5,-0.5) node {\word{0}};
 \draw (9.5,-0.5) node {\word{1}};
  \draw[semithick] (0.0, 0.0) -- (1, 1) -- (2, 1) -- (3,2) -- (4,2) -- (5,3) -- (6,3) -- (7,4) -- (8,4) -- (9,4) -- (10, 5);
 \end{tikzpicture}
 \caption{Line $l \in \calL_{10}$ given by $\frac{1}{2}x + \frac{1}{2}$ and a line $l' \in \calS_{10}$ that goes just above the points $p_{1} = (3,2) $ and $p_{2} = (7,4)$ giving the word $\word{1010101001}$.}\label{091120101629}
 \end{figure}
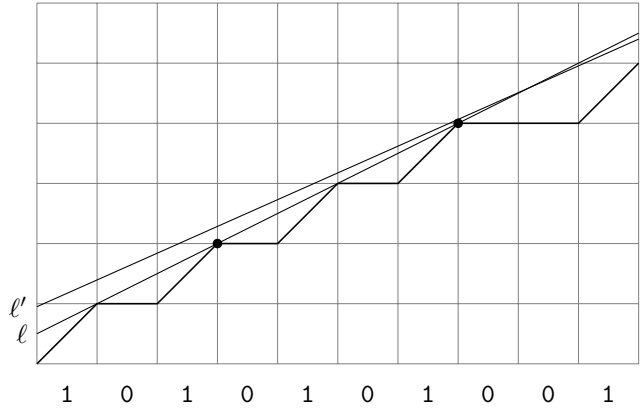

The next lemma is crucial to our work as it shows that the mapping $m$ gives a very nice correspondence between $\calL_n$ and $A_n$.
\begin{lem}
\label{le:Anml}
For all $n \geq 1$, the sets $m(\ell)$ with $\ell \in \calL_n$ form a partition of $A_n$. In particular, we have
\begin{equation}
\label{eq:Ansumml}
\nabs{A_n} = \sum_{\ell \in \calL_n} \nabs{m(\ell)}.
\end{equation}
\end{lem}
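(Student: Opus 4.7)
The plan is to establish the partition by showing that every $w \in A_n$ lies in $m(\ell)$ for exactly one $\ell \in \calL_n$: namely, the line extending the unique edge of the upper convex hull of the broken-line vertices $p_k = (k, j_k)$, $0 \leq k \leq n$, whose endpoints straddle $x = n/2$ in the sense $i_1 \leq n/2 < i_2$ (here $j_k$ counts the occurrences of \word{1} in $a_0 \cdots a_{k-1}$).

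First, suppose $w \in m(\ell)$ with witness $\ell' \in \calS_n$. Since $\ell'$ defines $w$ we have $j_k < \ell'(k) < j_k+1$ at every integer $k \in [0,n]$, and the assumption ``$\ell' > \ell$ at the integer points $p_1, p_2$'' combined with linearity forces $\ell' > \ell$ on $(i_1, i_2)$. A short case-analysis at each integer $k$---comparing $\ell(k)$ with $\ell'(k)$ and using the ``no grid point strictly between'' requirement---yields the key bound $\ell(k) \in [j_k, j_k+1]$ for every $k \in \{0, \ldots, n\}$. Evaluating at $k = 0$ gives $\rho_\ell \in [0,1]$; and each integer point of $\ell$ in the strip lies on either the lower staircase $\{(k, j_k)\}$ or the upper staircase $\{(k, j_k+1)\}$. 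Since $\ell'$ sits strictly above both $p_1$ and $p_2$, these two must be on the lower staircase, hence $p_1 = (i_1, j_{i_1})$ and $p_2 = (i_2, j_{i_2})$; consequently $\alpha_\ell = (j_{i_2} - j_{i_1})/(i_2 - i_1) \in [0,1]$ as well, so $\ell \in \calL_n$. Moreover, the line $\ell$ through these two broken-line points lies weakly above every $p_k$.

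The preceding properties force $(i_1, j_{i_1})$ and $(i_2, j_{i_2})$ to be consecutive vertices (allowing collinear runs) of the upper convex hull of $\{p_0, \ldots, p_n\}$. Because $p_0$ and $p_n$ are extremal in $x$, the hull vertices' abscissae form an increasing sequence $0 = x_0 < x_1 < \cdots < x_m = n$, and exactly one index $s$ satisfies $x_s \leq n/2 < x_{s+1}$. This pins down $\ell$ uniquely, giving disjointness. For existence, starting from any $w \in A_n$ I take this straddling hull edge as the candidate $\ell$ and construct $\ell'$ by moving $(\alpha_\ell, \rho_\ell)$ a small distance into the interior of the convex polygon $R(w) = \{(\alpha,\rho) \in (0,1)^2 : j_k \leq k\alpha + \rho < j_k+1 \text{ for all } k\}$, which is non-empty because $w$ is Sturmian; the uniform bound $\ell, \ell' \in [j_k, j_k+1]$ at every integer $k$ then prevents any lattice point from slipping strictly between $\ell$ and $\ell'$.

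The delicate part of the argument is the upper bound $\ell(k) \leq j_k+1$ for the straddling hull edge, which I expect to be the main obstacle: the hull property alone only guarantees $\ell(k) \geq j_k$, and the extended edge could in principle overshoot $j_k+1$ at some $k \notin [i_1, i_2]$. This is where the Sturmian (balance) hypothesis is essential. Using the existence of a ``true line'' $\ell_* : y = \alpha_* x + \rho_*$ with $\alpha_*, \rho_* \in (0,1)$ satisfying $j_k < k\alpha_* + \rho_* < j_k+1$ at every $k$, one sees that $\ell_* > \ell$ throughout $[i_1, i_2]$ by linearity; an overshoot $\ell(k_0) > j_{k_0}+1$ at some $k_0 \notin [i_1, i_2]$ would give $\ell_* < \ell$ at $k_0$, and a slope comparison $\alpha_\ell$ vs.\ $\alpha_*$ using the balance bounds $k\alpha_* + \rho_* - 1 < j_k \leq k\alpha_* + \rho_*$ should then yield the required contradiction.
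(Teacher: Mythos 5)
Your overall architecture is attractive and genuinely different from the paper's: you characterize the unique $\ell$ with $w \in m(\ell)$ as the extension of the upper-hull edge of $\{p_0,\dots,p_n\}$ straddling $x=n/2$, and get disjointness from the uniqueness of that edge, whereas the paper proves disjointness by a line-crossing argument and existence by translating a defining line down onto a grid point and rotating it until a second grid point is met. Your uniqueness half is essentially sound (up to the same boundary conventions about ``between'' that the paper also glosses over), and the perturbation of $(\alpha_\ell,\rho_\ell)$ into the interior of $R(w)$ would indeed finish existence \emph{if} the bound $\ell(k)\le j_k+1$ were available.

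That bound, which you yourself flag, is a genuine gap, and the route you sketch does not close it. From the tube bounds $k\alpha_*+\rho_*-1<j_k\le k\alpha_*+\rho_*$ one only gets, in the overshoot case at some $k_0>i_2$, that $0<\alpha_\ell-\alpha_*<1/(i_2-i_1)$ and hence an overshoot smaller than $(k_0-i_2)/(i_2-i_1)$; since the overshoot is a rational with denominator $i_2-i_1$, this yields no contradiction once $k_0-i_2\ge 2$. Worse, the ingredients you invoke (tube bounds, linearity, the hull property) are genuinely consistent with an overshoot: for the Sturmian word $w=\word{100100100}$ the hull edge through $(0,0)$ and $(1,1)$, extended, already exceeds $j_3+1$ at $k=3$. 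So any correct argument must use the straddling hypothesis $i_1\le n/2<i_2$ inside the overshoot analysis, which your sketch never does. One way to close the gap: if $\ell(k_0)>j_{k_0}+1$ with $k_0>i_2$, set $C=k_0-i_2$ and compare the block of $w$ in positions $[i_2,k_0)$ with the block in positions $[i_2-C,i_2)$; the latter lies inside $[0,n]$ precisely because $C\le n-i_2<i_2$, the hull property (all $p_k$ weakly below $\ell$, with $\ell(i_2)=j_{i_2}$) forces it to contain at least $\lceil \alpha_\ell C\rceil$ letters $\word{1}$, while the overshoot forces the former to contain at most $\lceil \alpha_\ell C\rceil-2$, contradicting the balance of $w$; the case $k_0<i_1$ is symmetric, using $i_1\le n/2$ so that the mirror block $[i_1,i_1+C)$ fits. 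With this (or an equivalent use of balance) supplied, your proof is complete and is a legitimately different, somewhat more explicit alternative to the paper's translate-and-rotate construction.
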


\begin{proof}
First, we show that $\bigcup_{\ell \in \calL_{n}} m(\ell) = A_{n}$.
Consider a Sturmian word $w\in A_{n}$ and let $\alpha x + \rho\in \calS_n$ be a line defining it. Without changing the word~$w$, we can
decrease $\rho$ until the line touches at least one integer point
$(i,j)$. If $i \leq n/2$, we let this point be $p_1$ and we rotate the line
clockwise around $(i,j)$ until we reach a line with at least two
integer points, and hence a line in $\calL_n$. The line will meet from
the above an integer point $(i', j')$ with $i' > n/2$, which we call $p_2$. If $i > n/2$, then we take $(i, j) = p_2$, rotate
the line counter-clockwise and proceed similarly. In either case we reach a line in $\calL_n$ which maps to the word~$w$. 

As an example, consider the situation in Figure \ref{091120101629}. When $l'$ is moved down-wards, the first integer point it meets is $(7, 4)$. Then, when the line is rotated counter-clockwise around $(7,4)$, the point $(3,2)$ is among the next integer points the line meets. 

Second, we show that $m(\ell) \cap m(\ell') = \emptyset$ for all distinct $\ell, \ell'\in \calL_{n}$.
Suppose contrary that $w \in m(\ell) \cap m(\ell')$. Then there are
points $p_1, p_2 \in \ell$ and $p_1', p_2' \in \ell'$ such that $i_1
\leq n/2 < i_2$ and $i_1' \leq n/2 < i'_2$, and the broken line of $w$ goes
through points $p_1, p_2, p_1'$, and $p_2'$. This implies that $p_1$ and $p_2$
must be below $\ell'$ and $p_1'$ and $p_2'$ must be below $\ell$. This
is possible only if the lines intersect to the right from $p_1$ and
$p_2$ and to the left from $p_1'$ and $p_2'$ or vice versa. However,
this is not possible since $i_1 < i_2'$ and $i_1' < i_2$, a contradiction.

Again, as an example, consider Figure \ref{091120101629}. It is clear that there cannot be a line different from $\ell$ that goes above both $(3,2)$ and $(7, 4)$ and through integer points of the word on both sides of $x = 10/2 = 5$.  
\end{proof}

Next we show that in situations where one is allowed to swap to considering extensions of original words, we can restrict to the technically simpler case where the word goes through all grid points of the corresponding line.
\begin{lem}
\label{le:redallints}
Let $w$ be a finite Sturmian word. Then there exists a finite Sturmian word $w'$ such that $w$ is a prefix of $w'$ and $w'$ goes through all grid points of the line $\ell' \in \calL_{|w'|}$ for which $w' \in m(\ell')$.
\end{lem}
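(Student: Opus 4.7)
The plan is to extend $w$ along an infinite Sturmian word and to choose the length of the extension so that the associated line has only two grid points, both of which the broken line of the extension visits.

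Every finite Sturmian word is defined by some pair $(\alpha, \rho) \in (0,1) \times (0,1)$, and $\alpha$ may be perturbed to become irrational without altering the first $\nabs{w}$ letters; hence I may assume that $w$ is defined by a line $\ell_w : y = \alpha x + \rho$ with $\alpha$ irrational. Let $\omega = a_0 a_1 a_2 \cdots$ be the infinite Sturmian word defined by $\ell_w$, and for each $n' \geq \nabs{w}$ put $w'_{n'} := a_0 a_1 \cdots a_{n'-1}$, which is a finite Sturmian word having $w$ as a prefix.

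Because $\alpha$ is irrational, the sequence $(\{i\alpha + \rho\})_{i \geq 0}$ is equidistributed modulo one, so there exist infinitely many odd integers $n' \geq \nabs{w}$ satisfying $\{n'\alpha + \rho\} < \{i\alpha + \rho\}$ for all $0 \leq i < n'$. Fix any such $n'$ and put $w' := w'_{n'}$. I now apply the construction from the proof of Lemma~\ref{le:Anml} to identify the unique $\ell' \in \calL_{n'}$ with $w' \in m(\ell')$: shifting $\ell_w$ downward meets the grid at the single grid point $(n', b)$ with $b = \nfloor{n'\alpha + \rho}$, and since $n' > n'/2$, rotating counter-clockwise around $(n', b)$ until another grid point $(a, c)$ with $a \leq n'/2$ is met yields $\ell'$ of slope $p/q$ in lowest terms, where $q = n' - a$.

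As $n'$ is odd, $a \leq n'/2$ actually forces $a < n'/2$ and hence $q > n'/2$. The grid points of $\ell'$ in $[0,n']^2$ form an arithmetic progression with common $x$-increment $q$, and $q > n'/2$ forces all such points other than $(a,c)$ and $(n',b)$ to lie outside $[0,n']^2$. Finally, the broken line of $w'$ passes through both of these: through $(n', b)$ by the definition of $b$; and through $(a, c)$ because condition~(ii) in the definition of $m$ forbids the grid point $(a, c+1)$ from lying strictly between $\ell'$ and $\ell_w$, giving $\ell_w(a) < c + 1$, while condition~(iii) gives $\ell_w(a) > c$, and together these force $\nfloor{\ell_w(a)} = c$. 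The main obstacle is ensuring that the rotation meets only the two predicted grid points, and that is precisely what the oddness of $n'$ achieves through the arithmetic-progression bound on $q$.
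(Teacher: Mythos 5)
Your construction breaks down at two points. First, the choice of an \emph{odd} $n'$: you need an odd integer $n' \geq \nabs{w}$ with $\nfpart{n'\alpha+\rho} < \nfpart{i\alpha+\rho}$ for all $0 \leq i < n'$, and you justify its existence by equidistribution. Equidistribution (or just density) only shows that the running minimum of $\nfpart{i\alpha+\rho}$ tends to $0$, hence that there are infinitely many such ``record'' times; it says nothing about their parity. The record times form a sparse, highly structured set (essentially the one-sided inhomogeneous best-approximation denominators determined by $\alpha$ and $\rho$), and the claim that infinitely many of them are odd for \emph{every} pair $(\alpha,\rho)$ is a genuine arithmetic assertion that you neither prove nor reduce to anything you cite. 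As it stands this step is a non sequitur.

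Second, even granting an odd record time $n'$, the assertion that the rotation about $(n',b)$ produces a line of slope $p/q$ in lowest terms with $q = n'-a$ is unjustified. The rotation stops at the first line through $(n',b)$ that contains another grid point of the grid, and that line may contain several grid points simultaneously; if $(a,c)$ is a grid point of it with $a \leq n'/2$, all you can say is $q \mid (n'-a)$. It is entirely possible that $q \leq n'/2$, in which case $\ell'$ carries three or more grid points (in particular one with abscissa strictly between $n'/2$ and $n'$), your ``only two grid points'' count collapses, and your verification — which only treats the two points $(a,c)$ and $(n',b)$ — does not cover the remaining grid points of $\ell'$. The oddness of $n'$ bounds $n'-a$, not $q$. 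Note that the whole detour is unnecessary: as in the paper's proof, one may take \emph{any} record time $n_0 > \nabs{w}$ (no parity condition), observe that lowering the defining line first meets the grid at $(n_0,j)$, and that when one then rotates about $(n_0,j)$ every grid point of the resulting line $\ell'$ is met from above with no grid point in between; hence the broken line of $w'$ passes through \emph{all} grid points of $\ell'$, however many there are. Repairing your argument essentially means discarding the parity device and the two-point count and arguing as the paper does.
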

\begin{proof}
Let $w$ be of length $n$ and defined by a line $y = \alpha x + \rho$, where $\alpha$ is irrational and $\alpha, \rho \in (0, 1)$. By increasing $\rho$ slightly if needed we can assume that the line has no integer points in the $n \times n$-grid. Let $\omega$ be the corresponding infinite word.

Let $n_0$ be the smallest integer $> n$ such that
\[
\alpha n_0 + \rho - \nfloor{\alpha n_0+\rho} < \min_{0 \leq m \leq n} (\alpha m + \rho - \nfloor{\alpha m+\rho}).
\]
Such $n_0$ exists since the right hand side is positive and the irrationality of $\alpha$ implies that $\{\alpha k \pmod{1} \colon {k \in \mathbb{N}}\}$ is dense in $\mathbb{R} / \mathbb{Z}$ (see \cite[Theorem 439]{HarWri1960}). Notice that now
\[
\alpha n_0 + \rho - \nfloor{\alpha n_0+\rho} = \min_{0 \leq m \leq n_0} (\alpha m + \rho - \nfloor{\alpha m+\rho}).
\]

We take $w'$ to be the length $n_0$ prefix of $\omega$. To find the corresponding $\ell'$ we repeat the first part of the proof of Lemma \ref{le:Anml}. When we decrease $\rho$ as there, the first integer point the line $y = \alpha x + \rho$ reaches in the $n_0 \times n_0$-grid is $(n_0, j)$ for some integer $j$. Then, when we rotate the line counter-clockwise around $(n_0, j)$ to reach a line $\ell'$ with at least two grid points, all the grid points which the line meets are met from the above. Hence the claim follows. 
\end{proof}

\section{The enumeration of finite Sturmian words} \label{finite}
Now we are ready to give a new proof of the following enumeration formula for finite Sturmian words.
\begin{thm}
\label{th:Stnfact}
For all $n \geq 1$, we have
\begin{equation}\label{210620111029}
|A_n| = 1 + \sum_{k = 1}^n \sum_{l = 1}^k \varphi(l) = 1 + \sum_{k=1}^{n}(n+1 - k)\varphi(k).
\end{equation}
\end{thm}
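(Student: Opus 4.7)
By Lemma~\ref{le:Anml}, $|A_n|=\sum_{\ell\in\calL_n}|m(\ell)|$, so the plan is to evaluate $|m(\ell)|$ for each $\ell$ via a pattern analysis and then sum the contributions grouped by the slope of $\ell$. Fix $\ell\in\calL_n$ with integer points $p_1,\dots,p_N$ ordered so that $i_1<\dots<i_N$. For an $\ell'\in\calS_n$ witnessing $w\in m(\ell)$, the linearity of $\ell'(x)-\ell(x)$ forces the set of indices $k$ for which $\ell'$ goes above $p_k$ to be either empty, all of $\{1,\dots,N\}$, an initial prefix $\{1,\dots,t\}$, or a final suffix $\{t+1,\dots,N\}$ for some $t$. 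Within each such pattern class the word $w$ is uniquely determined: condition~(ii) forbids $\ell'$ from crossing any grid point not on $\ell$, so $\lfloor\ell'(i)\rfloor$ stays constant as $\ell'$ varies inside the class. Condition~(iii) then retains only those patterns for which the ``below $\ell'$'' indices include some $i\leq n/2$ and some $i>n/2$.

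For a line $\ell$ with slope $\alpha=p/q\in(0,1)$ and intercept $\rho\in(0,1)$, both strict, small perturbations of $\alpha$ and $\rho$ in either direction stay inside $\calS_n$, so all patterns above are realizable and (iii) leaves exactly $N-1$ of them; hence $|m(\ell)|=N-1$. For the two boundary intercepts $\rho\in\{0,1\}$ the first integer point $p_1\in\{(0,0),(0,1)\}$ is pinned on a fixed side of $\ell'$ by the constraint $\rho'\in(0,1)$: $p_1$ lies below $\ell'$ when $\rho=0$ and above when $\rho=1$. A direct count then gives $|m(\ell)|=N-k^*$ when $\rho=0$ and $|m(\ell)|=k^*-1$ when $\rho=1$, where $k^*=\max\{k:i_k\leq n/2\}$. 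Crucially, the two boundary-intercept lines for a given slope together contribute $(N-k^*)+(k^*-1)=N-1$, exactly as one interior line would.

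Summing $|m(\ell)|$ over all $\ell\in\calL_n$ of a fixed reduced slope $p/q$ with $0<p<q\leq n$ therefore reduces to $\sum_{x_0=0}^{\min(q-1,\,n-q)}\lfloor(n-x_0)/q\rfloor$, where $x_0$ is the $x$-coordinate of the first integer point of $\ell$. Writing $n=Mq+r$ with $0\leq r<q$, a routine calculation shows this sum equals $n+1-q$. Multiplying by the number $\varphi(q)$ of reduced slopes with denominator $q$ and summing over $q$ yields a total interior contribution of $\sum_{q=2}^{n}(n+1-q)\varphi(q)$.

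The boundary slopes $\alpha\in\{0,1\}$ correspond to the four lines $y=0,\,y=1,\,y=x,\,y=x+1$, for which an analogous pattern analysis gives $|m(\ell)|=1,\,\lfloor n/2\rfloor,\,\lceil n/2\rceil,\,0$ respectively (the line $y=x+1$ forces every $p_k$ above $\ell'$ and thus $|m|=0$). These sum to $n+1$, and combining everything we obtain $|A_n|=(n+1)+\sum_{q=2}^{n}(n+1-q)\varphi(q)=1+\sum_{k=1}^{n}(n+1-k)\varphi(k)$, using $(n+1-1)\varphi(1)=n$. The main technical obstacle is the realizability check: for each combinatorial pattern one must explicitly exhibit an $\ell'\in\calS_n$ with the required above/below configuration that also satisfies condition~(ii), which is straightforward for interior lines but requires careful case-work in the boundary subcases.
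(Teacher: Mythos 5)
Your argument is correct, and its core reproduces the paper's: the per-line counts you extract from the above/below pattern analysis are exactly those of Lemma~\ref{210620111027} ($\Z_n(\ell)-1$ for $\alpha,\rho\in(0,1)$, and $\Z_n(\ell)-\Z_{n/2}(\ell)$, $\Z_{n/2}(\ell)-1$ for $\rho=0$, $\rho=1$, since your $k^*$ equals $\Z_{n/2}(\ell)$), and your pairing of the two boundary-intercept lines of a common slope is precisely the first step of the paper's wrap-up of Theorem~\ref{th:Stnfact}. Where you genuinely differ is the final aggregation. The paper evaluates $\sum_{\ell}(\Z_n(\ell)-1)$ by the double count of Lemma~\ref{le:countPhi}: each line is counted once for every abscissa $i$ of a non-leftmost grid point, and the lines having a grid point at abscissa $i$ number $\sum_{a\le i}\varphi(a)$; this needs no arithmetic identity and produces the middle form $1+\sum_{k}\sum_{l\le k}\varphi(l)$. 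You instead fix a reduced slope $p/q$ with $2\le q\le n$, parametrize the relevant lines by the abscissa $x_0\le\min(q-1,n-q)$ of their leftmost grid point, and evaluate $\sum_{x_0}\nfloor{(n-x_0)/q}=n+1-q$ --- which is indeed correct, since for $2q\le n+1$ the summands are $\nfloor{j/q}$ over the $q$ consecutive integers $j=n-q+1,\dots,n$, while otherwise every summand equals $1$ --- and your separate count $1+\nfloor{n/2}+\nceil{n/2}+0=n+1$ for the four lines $y=0$, $y=1$, $y=x$, $y=x+1$ agrees with the paper's. So you land directly on the weighted form $1+\sum_{k}(n+1-k)\varphi(k)$, at the cost of a small floor-sum computation that the paper's double counting avoids. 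One remark on rigor: the realizability of each pattern and the uniqueness of the word within a pattern (which, as you note, requires reading condition (ii) as excluding grid points of $\ell'$ off $\ell$) are asserted at essentially the same level of detail as in the proof of Lemma~\ref{210620111027}, so on that point your proof is neither more nor less complete than the paper's; the genuinely new ingredient is only the slope-wise summation.
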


This counting formula is proved by calculating the sum on the right hand side of \eqref{eq:Ansumml}. To do this, we use the next two lemmas.

\begin{lem}\label{210620111027}
Let $\ell \in \calL_n$ be the line $y = \alpha x + \rho$, in which case $\alpha \in [0,1]$ and $\rho \in [0,1]$. Then we have
\[
\nabs{m(\ell)} = 
\begin{cases}
0 & \text{if $\alpha = \rho = 1$} \\
1 & \text{if $\alpha = \rho = 0$} \\
\Z_n(\ell)-1 & \text{if $\alpha, \rho \in (0, 1)$} \\
\Z_{n}(\ell)-\Z_{n/2}(\ell) & \text{if $\rho = 0$ and $\alpha \in (0, 1]$} \\
\Z_{n/2}(\ell) - 1 & \text{if $\rho = 1$ and $\alpha \in [0, 1)$.} \\
\end{cases} 
\]
\end{lem}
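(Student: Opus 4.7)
The plan is to parametrize $m(\ell)$ combinatorially and then go through the five cases. Let $q_1, q_2, \ldots, q_k$ denote the integer points of $\ell$ inside the grid, listed in order of increasing $x$-coordinate, so $k = \Z_n(\ell)$. For any candidate $\ell' \in \calS_n$ satisfying (i)--(iii) of the definition of $m$, let $S \subseteq \{1, \ldots, k\}$ be the set of indices $i$ with $q_i$ lying below $\ell'$. I would first establish three preparatory facts: (a) since $\ell$ and $\ell'$ are distinct lines and meet in at most one point, $S$ must be either an initial segment $\{1, \ldots, r\}$ or a final segment $\{s+1, \ldots, k\}$; (b) given such a pattern, one can choose $\ell' \in \calS_n$ close enough to $\ell$ to realize it while satisfying (ii); and (c) distinct patterns give distinct words, because the value of $\lfloor \ell'(x) \rfloor$ at $x$ equal to the $x$-coordinate of $q_i$ shifts by one according as $\ell'$ passes above or below $q_i$. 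Writing $L = \Z_{n/2}(\ell)$ and $R = k - L$, condition (iii) says that $S$ contains an index whose $x$-coordinate is $\leq n/2$ and an index whose $x$-coordinate is $> n/2$: for initial segments this becomes $r \in \{L+1, \ldots, k\}$ (requiring $L \geq 1$), contributing $R$ patterns; for final segments it becomes $s \in \{0, \ldots, L-1\}$ (requiring $R \geq 1$), contributing $L$ patterns; the full set $\{1, \ldots, k\}$ appears once in each family.

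With this bookkeeping in place, the five cases follow quickly. If $\alpha = \rho = 1$ then for every $\ell' \in \calS_n$ one has $\ell'(x) - \ell(x) = (\alpha' - 1)x + (\rho' - 1) < 0$ on the grid, so $S = \emptyset$ and condition (iii) fails, giving $|m(\ell)| = 0$. If $\alpha = \rho = 0$ then symmetrically $\ell' > \ell$ on the grid, forcing $S = \{1, \ldots, k\}$; taking $\ell'$ with $\ell'(n) < 1$ yields the all-zero word of length $n$, so $|m(\ell)| = 1$. If $\rho = 0$ and $\alpha \in (0, 1]$, the constraint $\ell'(0) = \rho' > 0 = \ell(0)$ forces $q_1 = (0, 0) \in S$, so no final segment other than the full one appears; the remaining initial segments give $R = \Z_n(\ell) - \Z_{n/2}(\ell)$. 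Symmetrically, for $\rho = 1$, $\alpha \in [0, 1)$ the point $q_1 = (0, 1)$ must lie above $\ell'$, leaving final segments with $s \in \{1, \ldots, L-1\}$ and giving $L - 1 = \Z_{n/2}(\ell) - 1$. In the generic case $\alpha, \rho \in (0, 1)$, both families of patterns are realizable and, after accounting for the single overlap, combine to $R + L - 1 = k - 1 = \Z_n(\ell) - 1$.

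The main technical point I anticipate is verifying the positivity statements $L \geq 1$ and $R \geq 1$ where they enter the formula in the last three cases. Since $k \geq 2$ forces $\alpha$ to be rational, consecutive integer points of $\ell$ differ by a vector $(q, p)$ with $\alpha = p/q$ in lowest terms and $1 \leq p \leq q$; the grid condition $x_{q_k} \leq n$ together with $x_{q_1} \leq q - 1$ (strict when $\rho > 0$) bounds things tightly enough that placing all $q_i$ on one side of $x = n/2$ collides with $k \geq 2$. In the cases $\rho = 0$ or $\rho = 1$, one of the two positivity statements becomes immediate from the fact that $q_1$ has $x$-coordinate $0$, and the other reduces to the same spacing argument applied at the right end of $\ell$. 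Once these bounds are secured, the enumeration described above completes the proof.
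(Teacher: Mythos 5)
Your proof is correct and takes essentially the same route as the paper's: a case analysis on $(\alpha,\rho)$ in which the elements of $m(\ell)$ are counted according to which initial or final run of integer points of $\ell$ the defining line passes above, giving $\Z_n(\ell)-\Z_{n/2}(\ell)$, $\Z_{n/2}(\ell)-1$, or their sum minus one in the respective cases. Your write-up is slightly more explicit than the paper's (the initial/final-segment structure of the set $S$ and the positivity checks $\Z_{n/2}(\ell)\geq 1$, $\Z_n(\ell)-\Z_{n/2}(\ell)\geq 1$ are left implicit there), but the underlying argument is the same.
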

Notice that the remaining options for $\alpha$ and $\rho$ above, namely that $\alpha = 0$ and
$\rho \in (0,1)$ or that $\alpha = 1$ and $\rho \in (0,1)$, do not correspond to any line in $\calL_{n}$.
\begin{proof}
If $\alpha=\rho = 1$, then $\ell$ is the line $y = x + 1$, which does not have any image words, i.e.,  $\nabs{m(\ell)}=0$. 
If $\alpha = \rho = 0$, then $\ell$ maps only to the word~$\word{0}^{n}$, so that $\nabs{m(\ell)}=1$.  
If $\alpha, \rho \in (0,1)$, then a line defining $w\in m(\ell)$ has two possibilities:
\begin{itemize}
\item It can first go just below one to $\Z_{n/2}(\ell)-1$ first integer
  points of $\ell$ and then go just above the rest. There are $\Z_{n/2}(\ell)-1$ such words.
\item It can first go just above $\Z_{n/2}(\ell)+1$ to $\Z_n(\ell)$
  integer points of $\ell$ and then go just below the rest. There are $\Z_{n}(\ell)-\Z_{n/2}(\ell)$ such words.
\end{itemize}
Thus there are altogether $\Z_n(\ell)-1$ words in $m(\ell)$ when $\alpha, \rho \in (0,1)$.

If $\rho = 0$ and $\alpha \in (0, 1]$, then $\ell$ is given by $y=\alpha x$. Then $\ell$ maps only to words whose
defining line starts by going above some integer
points of $\ell$ since the defining line must go above origin, which is
the first integer point of the line. Therefore $\ell$ maps to $\Z_{n}(\ell)-\Z_{n/2}(\ell)$ words.

If $\rho = 1$ and $\alpha \in [0,1)$, then $\ell$ is the line $y = \alpha x + 1$, which maps only to words whose
defining line first goes below some points in the line. There are $\Z_{n/2}(\ell) - 1$ such words.

\end{proof}

\begin{lem}
\label{le:countPhi}
For all $n \geq 1$, we have
\begin{equation}
\label{eq:countPhi}
\sum_{\substack{\ell \in \calL_n \\ \alpha \in
    (0, 1], \rho \in [0, 1)}} (\Z_n(\ell)-1) = \sum_{i = 1}^n \sum_{j = 1}^i \varphi(j).
\end{equation}
\end{lem}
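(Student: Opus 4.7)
The plan is to recast each summand $\Z_n(\ell)-1$ as the number of pairs of consecutive integer points of $\ell$ inside the $n\times n$-grid. Since two integer points $p_1=(i_1,j_1)$ and $p_2=(i_1+a,j_1+b)$ of $\ell$ are consecutive precisely when $\gcd(a,b)=1$, and since the restrictions $\alpha\in(0,1]$, $\rho\in[0,1)$ force $1\le b\le a$ together with $\rho=(aj_1-bi_1)/a\in[0,1)$, the LHS equals the number of ordered pairs $(p_1,p_2)$ of integer points in $[0,n]^2$ for which $(a,b):=p_2-p_1$ satisfies
\[
\gcd(a,b)=1,\qquad 1\le b\le a\le n,\qquad 0\le aj_1-bi_1 < a.
\]
Every such pair lies on a unique $\ell$ in the sum, so no line is missed or over-counted.

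Next I would fix a primitive direction $(a,b)$ with $1\le b\le a\le n$ and count the admissible $p_1=(i_1,j_1)$. Requiring $p_2\in[0,n]^2$ imposes $i_1\in\{0,1,\dots,n-a\}$ and $j_1\le n-b$, while the $\rho$-inequality rewrites as $(b/a)i_1\le j_1<(b/a)i_1+1$, forcing $j_1$ to be the unique integer in an interval of length~$1$. A brief check using $i_1\le n-a$ and $b\le a$ gives
\[
0\le j_1<(b/a)i_1+1\le (b/a)(n-a)+1\le n-b+1,
\]
so the upper bound $j_1\le n-b$ is automatic. Hence each primitive $(a,b)$ contributes exactly $n-a+1$ pairs.

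Since for each $a$ there are $\varphi(a)$ integers $b\in\{1,\dots,a\}$ with $\gcd(a,b)=1$, summing yields
\[
\sum_{\substack{\ell\in\calL_n\\ \alpha\in(0,1],\,\rho\in[0,1)}}(\Z_n(\ell)-1)=\sum_{a=1}^{n}(n-a+1)\,\varphi(a)=\sum_{i=1}^{n}\sum_{a=1}^{i}\varphi(a),
\]
which is the RHS after renaming $a$ to $j$. The main obstacle is translating the $\rho$-condition cleanly into the arithmetic inequality $0\le aj_1-bi_1<a$; once this reduces to selecting a unique integer in an interval of length one, the rest is a routine swap of summation order.
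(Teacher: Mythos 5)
Your proof is correct and is essentially the same double-counting argument as the paper's: both interpret $\Z_n(\ell)-1$ as counting the non-leftmost grid points of $\ell$ (equivalently, your consecutive lattice-point pairs) and use that the coprime pair $(a,b)$ determines the intercept uniquely. The only difference is the order of summation: you group the pairs by the primitive direction $(a,b)$, getting $\sum_{a=1}^{n}(n+1-a)\varphi(a)$ directly, whereas the paper groups by the abscissa $i$ of the right-hand grid point via the sets $\calL_i^\ast$ and then obtains the iterated sum.
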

\begin{proof}
For $i \in \{1, \dotsc, n\}$, let $\calL_i^\ast$ be the set of all those lines in $\calL_i$ with $\alpha \in (0, 1]$ and $\rho \in [0, 1)$ that have an integer point $(i, j)$ for some $j \in \{1, \dotsc, i\}$. A line $\ell$ counted on the left hand side of \eqref{eq:countPhi} belongs to $\Z_n(\ell)-1$ sets in the collection $\{\calL_i^\ast \colon 1 \leq i \leq n\}$ (indeed $\ell \in \calL_i^\ast$ if and only if $i$ is an $x$-coordinate of a non-leftmost grid point of $\ell$). Hence
\begin{equation}
\label{eq:ZnLast}
\sum_{\substack{\ell \in \calL_n \\ \alpha \in
    (0, 1], \rho \in [0, 1)}} (\Z_n(\ell)-1) = \sum_{i = 1}^n \nabs{\calL_i^\ast}.
\end{equation}

The set $\calL_i^\ast$ consists of lines $y = \frac{b}{a} x + \frac{c}{a}$ with $a, b, c$ satisfying
\[
0 < b \leq a \leq i, \quad 0 \leq c < a, \quad \gcd(a, b)=1, \quad \text{and} \quad c \equiv -bi \pmod{a},
\]
where the condition $a \leq i$ is needed since the line must contain an integer point with $x$-coordinate $\in \{0, 1, \dotsc, i-1\}$ besides the one with $x$-coordinate $i$. Now $a$ and $b$ determine $c$ uniquely, so that
\[
|\calL_i^\ast| = \sum_{a = 1}^i \sum_{\substack{1 \leq b \leq a \\ \gcd(b, a) =1}} 1 = \sum_{a=1}^i \varphi(a),
\]
and the lemma follows by combining this with \eqref{eq:ZnLast}. 
\end{proof}

% We notice that the number of lines in $\calL_i^\ast$ that go through a fixed $(i, j)$
% equals the number of lines which go through $(i, i)$ and another integer
% point $(i', j')$ with $0 \leq i' < i$, and have $\alpha \in
% (0, 1]$ and $\rho \in [i-j,i-j+1)$. 
% Indeed, just shift any such $\ell \in \calL_{i}^{*}$ upwards by $i-j$ keeping the slope.
% Summing over $j$ we see that $\nabs{\calL_i^\ast}$ equals the number
% of all lines with $\alpha \in (0, 1]$ that go through $(i,i)$ and another integer point $(i', j')$ with $0 \leq i' < i$ and $i' \leq j' < i$. 

% For $i' = i-1, i-2, \dotsc, 1, 0$, we count the number of such lines with another
% integer point $(i', j')$ with $i' \leq j' <i$. A line has not been counted for a larger $i'$ if $\gcd(i-i', i-j') = 1$. Since $i-j'$ here ranges between $1$ and $i-i'$, the number of new lines is $\varphi(i-i')$. 
% Hence we see that
% \[
% \nabs{\calL_i^\ast} = \sum_{i' = 0}^{i-1} \varphi(i-i') = \sum_{k = 1}^i \varphi(k)
% \]
% completing the proof of the lemma. 
% \end{proof}

We are finally ready to wrap up the proof of Theorem~\ref{th:Stnfact}.

\begin{proof}[Theorem~\ref{th:Stnfact}]
First notice that, for $\alpha \neq 0, 1$, the cardinalities of the sets $m(\ell:y = \alpha x)$ and $m(\ell:y = \alpha x + 1)$ add up to $\Z_n(\ell:y = \alpha x) - 1$.
Likewise, the cardinalities of $m(\ell:y = x)$ and $m(\ell:y =  1)$ add up to $\Z_n(\ell:y = x) - 1$.
Therefore, using Lemma~\ref{210620111027}, we get
\begin{align*}
\sum_{\ell \in \calL_n} \nabs{m(\ell)} &=
\nabs{m(\ell \colon y = 0)} + 
\sum_{\substack{\ell \in \calL_n \\ \alpha, \rho  \in (0, 1) }} \nabs{m(\ell)}
+ \sum_{\substack{\ell \in \calL_n \\ \rho = 0 , \alpha \in (0, 1]}} \nabs{m(\ell)}
+ \sum_{\substack{\ell \in \calL_n \\ \rho = 1 , \alpha \in [0, 1)}} \nabs{m(\ell)}
 \\
&=
1 +
\sum_{\substack{\ell \in \calL_n \\ \alpha, \rho  \in (0, 1) }} \nabs{m(\ell)}
+ \sum_{\substack{\ell \in \calL_n \\ \rho = 0 , \alpha \in (0, 1)}} \nabs{m(\ell)} + \nabs{m(\ell:y=x)} \\
&\phantom{= 1 }+ \sum_{\substack{\ell \in \calL_n \\ \rho = 1 , \alpha \in (0, 1)}} \nabs{m(\ell)} + \nabs{m(\ell:y = 1)} \\
&=
1 +
\sum_{\substack{\ell \in \calL_n \\ \alpha, \rho  \in (0, 1) }} \nabs{m(\ell)}
+ \sum_{\substack{\ell \in \calL_n \\ \rho = 0 , \alpha \in (0, 1)}} \bigl(\Z_{n}(\ell)-1\bigr) +  \bigl(\Z_n(\ell:y = x) - 1\bigr)\\
&=
1 + \sum_{\substack{\ell \in \calL_n \\ \alpha \in (0, 1], \rho \in [0, 1)}} \bigl(\Z_n(\ell) - 1\bigr).
\end{align*}
This identity together with Lemmas~\ref{le:Anml} and~\ref{le:countPhi} readily implies~\eqref{210620111029}, completing the proof. 
\end{proof}

\section{The enumeration of palindromic Sturmian words} \label{palindrome}
In the first part of this section we study how palindromes behave with respect to the mapping $m$. First we show that any palindrome must go through all integer points of the corresponding line $\ell \in \calL_n$. 
\begin{lem}\label{280820111347}
Let $w$ be a Sturmian palindrome of length $n$, and let $\ell \in \calL_{n}$ be such that  $w \in m(\ell)$.  
Then the broken line of $w$ goes through all integer points of~$\ell$. In particular, there are no palindromes in the set $m(\ell: y = \alpha x + 1)$ for any $\alpha \in [0,1]$.
\end{lem}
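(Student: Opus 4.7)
My plan is to use the palindromic symmetry of the broken line---setting $h_k$ for its height at $x = k$, so $h_0 = 0$, $h_n = J$ is the total number of $\word{1}$s in $w$, and palindromicity gives $h_k + h_{n-k} = J$ for all $k$---to force a contradiction whenever the broken line misses an integer point of $\ell$. The key idea is that any ``below'' integer point $q = (i, j) \in \ell$ (one through which the broken line passes one unit below, at $(i, j - 1)$) reflects, via palindromic symmetry, to a broken line point $P = (n - i, J - j + 1)$ whose position relative to $\ell$ and a witness $\ell' \in \calS_n$ can be made to produce a lattice point strictly between $\ell$ and $\ell'$, contradicting the ``no integer points between $\ell$ and $\ell'$'' clause in the definition of $m(\ell)$.

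The structural preliminary is that, since $\ell$ and $\ell'$ are straight lines intersecting at most once and $\ell'$ must lie above two integer points of $\ell$ straddling $n/2$, the ``below'' integer points of $\ell$ form a consecutive block at one end of the ordered list $q_1, \ldots, q_s$; I would treat the ``below-then-above'' case in detail and then observe the symmetric argument for ``above-then-below''. Writing $j = \alpha i + \rho$ gives $\ell(n - i) = J - j + \mu$, where $\mu := \alpha n + 2\rho - J$; the ``no integer between'' condition at $x = n$ forces $\mu \in [0, 2)$. For $\mu \in (0, 1)$ the reflected point $P$ lies strictly above $\ell(n - i)$, while the broken line height $J - j + 1$ at $x = n - i$ combined with $\ell' > \ell$ at $x = n - i > \xi$ (where $\xi < n/2$ is the unique crossing of $\ell$ and $\ell'$) yields $\ell'(n - i) > J - j + 1$; thus $P$ is the desired lattice point strictly between $\ell$ and $\ell'$. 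For $\mu \in (1, 2)$, I would apply the same type of reflection argument instead to an ``above'' integer point of $\ell$ (say the rightmost $q_s$) to produce an analogous lattice point strictly between $\ell$ and $\ell'$. The ``in particular'' statement is then immediate: when $\ell$ has equation $y = \alpha x + 1$, the integer point $(0, 1) \in \ell$ lies in the grid but cannot be reached by the broken line (which always starts at $(0, 0)$), so by the main claim, $m(\ell)$ contains no palindrome.

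The main obstacles I anticipate are (a) ruling out the degenerate configuration where some candidate $\ell'$ passes exactly through $P$, so that $P$ lies on $\ell'$ rather than strictly between $\ell$ and $\ell'$---this should be circumvented by selecting $\ell'$ in the interior of the polytope of $(\alpha', \rho')$ defining $w$, and showing that the tangent configuration forces an integer point between $\ell$ and $\ell'$ at some other $x$; and (b) the borderline case $\mu = 1$, where $P$ lies exactly on $\ell$. For the latter, I would fall back on a combinatorial argument: palindromicity together with $\mu = 1$ pins the above/below transition index to $K = s/2 + 1$ (with $s$ even), and the involution $(i, j) \mapsto (n - i, J + 1 - j)$ of the integer points of $\ell$ (valid precisely when $\mu \in \Z$) then forces $i_K = n - i_{s/2} > n/2$, directly contradicting the $m(\ell)$ straddling requirement of an ``above'' integer point with $x$-coordinate $\leq n/2$.
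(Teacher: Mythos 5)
Your strategy is genuinely different from the paper's: the paper takes the first grid point of $\ell$ that the defining line passes below and argues combinatorially (the prefix before the break has period $a$ with exactly $b$ letters $\word{1}$ in every length-$a$ window of the first half, while the break creates a length-$a$ window with only $b-1$ letters $\word{1}$ in the second half, impossible for a palindrome), whereas you reflect the broken line through its centre and try to manufacture a lattice point strictly between $\ell$ and the witness $\ell'$, contradicting condition (ii). Much of your plan does check out: the symmetry $h_k+h_{n-k}=J$, the consecutive-block structure of the missed points, the identity $\ell(n-i)=J-j+\mu$, the case $\mu>1$ (reflecting a \emph{hit} point gives a lattice point strictly below $\ell$ and strictly above $\ell'$, since $\ell'(n-i)<h_{n-i}+1$), and your $\mu=1$ involution argument (the map $(x,y)\mapsto(n-x,J+1-y)$ preserves the grid points of $\ell$, pairs hits with misses, and forces every hit point to lie on the wrong side of $n/2$, violating (iii)) are all sound.

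The genuine gap is your obstacle (a), the tangent case where the reflected point $P=(n-i,\,J-j+1)$ lies exactly on $\ell'$. Your proposed fix, choosing $\ell'$ in the interior of the polytope of pairs $(\alpha',\rho')$ defining $w$, is not available: the witness is constrained by condition (ii), and nudging $\ell'$ upward off $P$ places $P$ strictly between $\ell$ and the perturbed line, so the perturbed line is simply no longer a witness; this yields no contradiction for the original one. The tangency is not a removable artefact either: for $n=4$ and $\ell\colon y=\tfrac12x+\tfrac12$, the word $\word{1011}$ lies in $m(\ell)$ and \emph{every} admissible witness must satisfy $4\alpha'+\rho'=3$, i.e.\ pass through the grid point $(4,3)$, which is strictly above $\ell$ (otherwise $(4,3)$ would be strictly between the lines). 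Your fallback claim that the tangent configuration forces an integer point between $\ell$ and $\ell'$ at some other $x$ is unproven, and the natural candidates fail: translating $P$ by $\pm(a,b)$ along the direction of $\ell$ lands on the wrong side of $\ell'$ for one sign (because $\alpha'$ and $b/a$ compare the wrong way in the relevant configuration) and can leave the $n\times n$-grid for the other sign when the missed point is the extreme grid point of $\ell$. One can partially repair this: if at least two grid points of $\ell$ are missed, their two reflections cannot both lie on $\ell'$ (that would force $\ell'$ parallel to and above $\ell$, contradicting that $\ell'$ passes below the missed points), so at least one reflection is strictly between the lines; but the case of a single missed point whose reflection is tangent to $\ell'$ is left open in your write-up and needs a new idea. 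The paper's periodicity-and-counting argument sidesteps all of these boundary configurations.
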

\begin{proof}
Let $\ell' \in \calS_{n}$ be a line defining $w$. If, contrary to what we want to prove, $w$ does not go through all integer points of $\ell$, then there  exist  three grid points $p_{1}, p_{2}, p_{3} \in \ell$ such that $i_{1} \leq n/2 < i_{2}$ and the line $\ell'$ goes just above $p_{1}$ and $p_{2}$ and below $p_{3}$. Let us assume that  $p_{3}$ is the first grid point in~$\ell$ with this property.
Also, write $\ell = \frac{b}{a}x + \frac{c}{a}$ with $(a,b) = 1$.

Suppose first that $i_{3} > i_{2}$. Then the prefix of $w$ of length $i_{3} - 1$ has period~$a$, but the period breaks at position $i_{3}$. 
This means that, while every block of length $a$ in the first half of $w$ has exactly $b$ occurrences of the letter $\word{1}$, 
there is a block of length~$a$ in the second half of $w$ with only $b-1$ occurrences of~$\word{1}$. Thus $w$ is not a palindrome, a contradiction.

The other possibility for $i_{3}$ is that $i_{3} < i_{1}$. This situation is analogous to the first case, and hence contradictory; we omit details.  
\end{proof}

The previous lemma in particular implies that every line $\ell \in \calL_n$ corresponds to at most one palindrome. Next we figure out which lines actually give a palindrome.
\begin{lem}\label{310820111103}
 There is a bijective correspondence between the palindromic Sturmian words $\neq \word{1}^{n}$ of length $n\geq 1$ and 
 the lines $\ell \in \calL_{n}$ given by $y = \frac{b}{a}x + \frac{c}{a}$ satisfying $(a,b) = 1$,  $0\leq b < a \leq n$, $0\leq c < a$, and
\begin{equation}\label{230820111329}
  2 c \equiv -bn -1 \pmod{a}.
\end{equation}
\end{lem}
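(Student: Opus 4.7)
The plan is to set up the bijection explicitly: to each line $\ell : y = (b/a)x + c/a$ satisfying the listed conditions, associate the word $w$ of length $n$ that $\ell$ itself defines through $a_k = \nfloor{(k+1)b/a + c/a} - \nfloor{kb/a + c/a}$, and show that these are precisely the palindromic Sturmian words of length $n$ different from $\word{1}^n$.

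For the forward direction, fix a palindrome $w \neq \word{1}^n$ of length $n$ and let $\ell$ be the unique line in $\calL_n$ with $w \in m(\ell)$ (Lemma~\ref{le:Anml}). By Lemma~\ref{280820111347} the broken line of $w$ passes through every integer point of $\ell$, so $\ell$ has at least two integer points and its slope is therefore rational; write $\ell : y = (b/a)x + c/a$ with $(a,b) = 1$. The constraints $\alpha, \rho \in [0, 1]$ and $\Z_n(\ell) \geq 2$ immediately force $0 \leq b \leq a \leq n$ and $0 \leq c \leq a$. Lemma~\ref{280820111347} excludes $\rho = 1$, so $c < a$, and $b = a$ combined with $(a, b) = 1$ would give $\ell : y = x$ or $\ell : y = x + 1$; by Lemma~\ref{280820111347} the second has no palindrome in $m(\ell)$ (its integer points are $(k, k+1)$ whereas the broken line starts at $(0,0)$), and the first forces any palindrome in $m(\ell)$ to pass through all of $(0, 0), (1, 1), \dotsc, (n, n)$, i.e., to be $\word{1}^n$, which is excluded. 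Hence $0 \leq b < a \leq n$ and $0 \leq c < a$.

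To read off the congruence, observe that because the broken line of $w$ meets every integer point of $\ell$ and is otherwise squeezed just below $\ell$, the word $w$ coincides with the word defined by $\ell$ itself via the formula above. Setting $r_k := (bk + c) \bmod a$, one checks directly that $w_k = \word{1}$ iff $r_k \geq a - b$, and that $r_{n-1-k} \equiv D - r_k \pmod{a}$ with $D := bn - b + 2c$. The palindromic identity $w_k = w_{n-1-k}$ therefore says that for every $k$ the residues $r_k$ and $D - r_k$ lie on the same side of $a - b$; since $a \leq n$ and $(a, b) = 1$, the orbit $\nsset{r_k : 0 \leq k < n}$ fills $\Z / a \Z$, so the condition must hold for every residue. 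Equivalently, the affine involution $r \mapsto D - r$ of $\Z / a \Z$ preserves the $b$-element interval $\nsset{a - b, a - b + 1, \dotsc, a - 1}$; a short direct computation of the image of this interval pins this down to $D \equiv -b - 1 \pmod{a}$, which rearranges to $2c \equiv -bn - 1 \pmod{a}$.

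For the converse, let $\ell$ satisfy the stated conditions and define $w$ via the formula. The involution calculation run backwards gives that $w$ is a palindrome. Using $b < a$, $c < a$, and $a \leq n$ one has $bn + c \leq (a-1)(n+1) < an$, so the number of $\word{1}$s in $w$, which equals $\nfloor{(bn+c)/a}$, is strictly less than $n$; hence $w$ contains at least one $\word{0}$, and $w \neq \word{1}^n$. Finally, $w \in m(\ell)$: its broken line passes through every integer point of $\ell$, and writing $P_0 = (i_0, j_0)$ and $P_m = (i_m, j_m)$ for the first and last such points in the grid, one verifies $i_0 + a \leq n$ and $i_m + a > n$ (the latter because $b < a$ and $a \leq n$ rule out $P_{m+1}$ fitting in the grid), which combined with $i_0 \leq a - 1$ and $i_m \geq a$ give $2 i_0 \leq n - 1$ and $2 i_m \geq n + 1$, hence $i_0 \leq n/2 < i_m$; a defining line $\ell' \in \calS_n$ can then be obtained by tilting $\ell$ slightly upward. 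The main technical obstacle is verifying cleanly that $r \mapsto D - r$ preserves the interval $\nsset{a - b, \dotsc, a - 1}$ precisely when $D \equiv -b - 1 \pmod{a}$; everything else is routine bookkeeping around the $\calL_n$ constraints.
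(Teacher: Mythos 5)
Your proof is correct, but it reaches the congruence by a genuinely different route than the paper. The paper, after the same preliminary reductions (Lemma~\ref{le:Anml} for uniqueness of $\ell$ and Lemma~\ref{280820111347} to force the broken line through every integer point of $\ell$ and to exclude $\rho=1$), argues geometrically: it rotates the picture by $180^{\circ}$, observes that the line $\ell'$ with $w^{R}\in m(\ell')$ must sit exactly $(a-1)/a$ above the rotated line $\ell^{R}$ (using that the vertical distance $\{(bk+c)/a\}$ attains $(a-1)/a$ somewhere), and equates $\ell'$ with $\ell$ to read off $2c\equiv -bn-1 \pmod a$; surjectivity is then dispatched by ``tracing back the steps.'' You instead work arithmetically with the residues $r_k=(bk+c)\bmod a$, characterize the letters by $r_k\geq a-b$, use the reversal relation $r_{n-1-k}\equiv D-r_k$ with $D=bn-b+2c$, and reduce palindromicity (via the fact that $\{r_k\}$ exhausts $\Z/a\Z$ since $a\leq n$) to the affine involution $r\mapsto D-r$ preserving the interval $\{a-b,\dotsc,a-1\}$, which forces $D\equiv -b-1\pmod a$. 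What your route buys: the converse direction and the ``one palindrome per line'' uniqueness become essentially automatic, since the word attached to $\ell$ is explicitly the word cut out by $\ell$ itself, and you verify $w\in m(\ell)$ and $w\neq\word{1}^{n}$ directly rather than by an appeal to reversing the construction; what the paper's route buys is a more visual explanation of where the congruence comes from. Two small points you should write out: the ``short direct computation'' is indeed short -- the image of $\{a-b,\dotsc,a-1\}$ under $r\mapsto D-r$ is the interval $\{D+1,\dotsc,D+b\}$, and two cyclic intervals of the same length $b$ with $1\leq b<a$ coincide exactly when their left endpoints agree (the case $b=0$ forces $a=1$, where the congruence is vacuous); and the assertion that the palindrome coincides with the word defined by $\ell$ itself deserves a sentence (the defining line $\ell'$ passes weakly above every integer point of $\ell$ with no grid point between the lines, so the floors agree at every abscissa), though the paper leans on the same identification implicitly when it speaks of the vertical distance $(a-1)/a$.
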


\begin{proof}
 Let $w \neq \word{1}^{n}$ be a Sturmian palindrome of length $n$. Then $w \in m(\ell)$ for a line $\ell \in \calL_{n}$ with equation $y = \frac{b}{a}x + \frac{c}{a} \in \calL_{n}$, where $(a, b) = 1$ and
 $0 \leq b < a \leq n$. Furthermore, Lemma~\ref{280820111347} implies that the broken line of $w$ goes through all grid points of $\ell$ and $0 \leq c < a$.
In what follows, we will verify~\eqref{230820111329}. 

Since $b x + c \pmod{a}$ ranges all possible values $0,1,\ldots, a-1$  when $x$ ranges $0, 1, \ldots, n$, there exists an integer $k$ such that the vertical distance between $\ell$ and
the broken line of $w$ is $(a-1)/a$ at $x=k$, see Figure~\ref{051120101420}.

Rotate the plane by $180\,^{\circ}$ and, if necessary, shift the grid so that the obtained broken line starts from the origin, see Figure~\ref{051120101421}. 
The new broken line corresponds to the word $w^{R}$, the reversal of $w$.
Let $\ell^{R}$ denote the line obtained from $\ell$. Furthermore, let  $\ell' \in \calL_n$ be the line for which $w^{R} \in m(\ell')$, see Figure~\ref{16082011}.

 \begin{figure}[h]
 \centering
 \begin{tikzpicture}[scale=1.5]
 \draw[step=.5cm,gray,very thin] (-1,-0.5) grid (1.5,1.5); 
 \draw (-1, -0.33) -- (1.5, 1.33);
 \draw[semithick] (-1, -0.5) -- (-0.5, 0) -- (0, 0) -- (0.5, 0.5) -- (1, 1) -- (1.5, 1);
 \filldraw (-0.5, 0) circle(1pt);
 \filldraw (1, 1) circle(1pt);
 \draw[semithick] (0, 0) -- (0, 0.33); 
 \draw (1.6, 0.165) node[anchor=west]{$(a-1)/a$};
 \draw[->] (1.6, 0.165) -- (0.02, 0.165);
  \draw (1.5, 1.4) node[anchor=west]{$\ell$};
 \draw (1.5, 1) node[anchor=west]{$w$};
 \end{tikzpicture}
 \caption{Before rotation. Here $\ell = \frac{2}{3}x + \frac{1}{3}$ and the broken line corresponds to $w=\word{10110}$.}\label{051120101420}
 \end{figure}
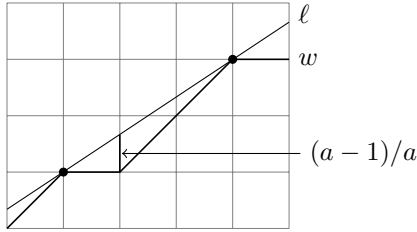

\begin{figure}[h]
 \begin{tikzpicture} [scale=1.5]
 \pgftransformrotate{180}
 \draw[step=.5cm,gray,very thin] (-1,-1) grid (1.5,1); 
 \draw (-1, -0.33) -- (1.5, 1.33);
 \draw[semithick] (-1, -0.5) -- (-0.5, 0) -- (0, 0) -- (0.5, 0.5) -- (1, 1) -- (1.5, 1);
 \filldraw (-0.5, 0) circle(1pt);
 \filldraw (1, 1) circle(1pt);
  \draw[semithick] (0, 0) -- (0, 0.33);
 \draw (-1, -0.2) node[anchor=west]{$\ell^{R}$};
 \draw (-1, -0.6) node[anchor=west]{$w^{R}$};
 \end{tikzpicture}
 \caption{After rotation. Here the broken line corresponds to $w^{R} = \word{01101}$, and $\ell^{R}$ denotes the line $\ell$ after rotation.}\label{051120101421}
\end{figure}
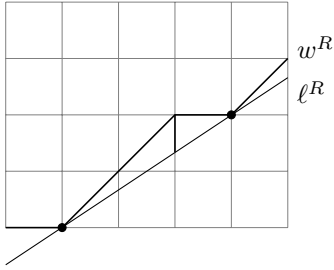
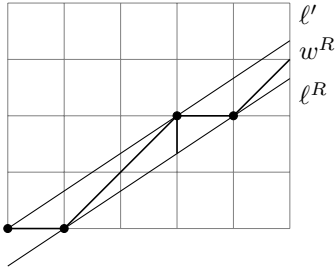
\begin{figure}
 \begin{tikzpicture} [scale=1.5]
 \draw[step=.5cm,gray,very thin] (0,0.5) grid (2.5,2.5); 
 \draw (0, 0.5) -- (2.5, 2.166);
 \draw (0, 0.167) -- (2.5, 1.83);
 \draw[semithick] (0, 0.5) -- (0.5, 0.5) -- (1, 1) -- (1.5, 1.5) -- (2, 1.5) -- (2.5, 2);
 \filldraw (0, 0.5) circle(1pt);
 \filldraw (1.5, 1.5) circle(1pt);
 \filldraw (0.5, 0.5) circle(1pt);
 \filldraw (2, 1.5) circle(1pt);
  \draw[semithick] (1.5, 1.166) -- (1.5,1.5);
 \draw (2.5, 1.7) node[anchor=west]{$\ell^{R}$};
 \draw (2.5, 2.1) node[anchor=west]{$w^{R}$};
 \draw (2.5, 2.4) node[anchor=west]{$\ell'$};
 \end{tikzpicture}
 \caption{The line $\ell' = \frac{2}{3}x$ added to the grid in Figure~\ref{051120101421}.}\label{16082011}
\end{figure}

By Lemma \ref{le:Anml}, the word $w$ is a palindrome iff $w = w^{R}$ iff $\ell' = \ell$ and $w^{R}$ goes through all integer points of $\ell'$.
In particular $\ell'$ must have the same slope $b/a$ as~$\ell$. The original $\ell$ goes through the point $(n, bn/a+c/a)$, hence $\ell^{R}$ starts from the point $\bigl(0, -(bn/a+c/a) \pmod{1}\bigr)$. 
As~$\ell'$ must touch but not intersect the broken line of $w^{R}$, it must be $(a-1)/a$ above $\ell^{R}$. Hence the equation of
$\ell'$ is 
\[
y \equiv \frac{b}{a} x - \left(\frac{b}{a} n + \frac{c}{a}\right) +
    \frac{a-1}{a} \pmod{1} . 
\]
This should be the same as the original line so that 
\[
- \left(\frac{b}{a} n + \frac{c}{a}\right) + \frac{a-1}{a} \equiv
\frac{c}{a} \pmod{1} \iff 2 c \equiv -bn -1 \pmod{a},
\]
and so \eqref{230820111329} is verified.

We have thus established a mapping between Sturmian palindromes $\neq \word{1}^{n}$ of length $n$ and the lines satisfying conditions described above. Injectivity of the mapping follows from Lemma~\ref{le:Anml} and surjectivity is shown by tracing back the steps taken above. 
\end{proof}

Now we are ready to  give a new proof of the following theorem by de Luca and De Luca~\cite{deLDeL2006}.

\begin{thm}\label{28082011}
The number of Sturmian palindromes of length $n$ is
\begin{equation}\label{05092011}
1 + \sum_{k=0}^{\nceil{n/2}-1}\varphi(n-2k).
\end{equation}
\end{thm}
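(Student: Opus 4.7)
The plan is to count the triples $(a, b, c)$ of Lemma~\ref{310820111103}, which are in bijection with the Sturmian palindromes of length $n$ other than $\word{1}^{n}$, via a substitution that makes the palindromic congruence transparent. For each such triple let $x_0 \in \{0, 1, \dotsc, a - 1\}$ be the smallest $x$-coordinate of an integer point on the line $\ell = \frac{b}{a} x + \frac{c}{a}$, and set $u = n - 2 x_0$. Since $b x_0 + c \equiv 0 \pmod{a}$, substituting $c \equiv -b x_0 \pmod{a}$ into the palindromic congruence $2c \equiv -bn - 1 \pmod{a}$ transforms it into the much cleaner relation $bu \equiv -1 \pmod{a}$.

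From this transformed relation one reads off that $\gcd(a, u) = 1$, that $b \equiv -u^{-1} \pmod{a}$ is uniquely determined by $(a, u)$, and that $c$ is then recovered from $b$ and $x_0$. So I expect a bijection between the triples in question and pairs $(a, u)$ satisfying $u \equiv n \pmod{2}$, $\gcd(a, u) = 1$, and $a$ lying in the integer interval $\bigl[\frac{n-u}{2} + 1,\ \frac{n+u}{2}\bigr]$. The lower bound comes from $x_0 \leq a-1$ and the upper bound from $\ell \in \calL_n$, i.e.\ $x_0 + a \leq n$ (the line must have at least two integer points in $[0, n]$).

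Next I would count by fixing $u$. The combined constraints force $1 \leq u \leq n$, and for each admissible $u$ the allowed interval for $a$ consists of exactly $u$ consecutive integers, hence a complete residue system modulo $u$; exactly $\varphi(u)$ of them are coprime to $u$. Summing and re-indexing via $u = n - 2k$ yields $\sum_{k=0}^{\lceil n/2 \rceil - 1} \varphi(n-2k)$, and adding the palindrome $\word{1}^{n}$ produces~\eqref{05092011}.

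The main point to verify carefully will be the bijection in the edge cases: the triple $a = 1$, $b = c = 0$ (corresponding to the line $y = 0$ and the palindrome $\word{0}^{n}$) must sit in the $u = n$ fiber, so that the contribution there is $\varphi(n)$ rather than $\varphi(n) - 1$; and one must check that the interval really has length exactly $u$ even when $u = n$, where it becomes $[1, n]$.
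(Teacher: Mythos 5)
Your proposal is correct and is essentially the paper's own proof: the paper likewise counts the lines of Lemma~\ref{310820111103} according to the $x$-coordinate $i$ of their leftmost integer point (your $u$ is just $n-2i$), reduces the congruence~\eqref{230820111329} to $b(n-2i)\equiv -1 \pmod{a}$ with $b$ then uniquely determined when $\gcd(a,n-2i)=1$, and counts the admissible $a$ in the interval $[i+1,\,n-i]$ of $n-2i$ consecutive integers coprime to $n-2i$, getting $\varphi(n-2i)$ for each $i$ and adding $1$ for $\word{1}^{n}$. One small correction: the bound $u\leq n$ is not forced by the three constraints you list (for $n=3$ the pair $u=5$, $a=2$ satisfies all of them), so it must be imposed as part of the parametrization, coming from $x_{0}\geq 0$; with that adjustment your bijection and enumeration coincide with the paper's.
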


\begin{proof}
Let us denote the set of lines in $\calL_{n}$ satisfying the conditions of Lemma~\ref{310820111103} by $\calP_{n}$. We will prove the claim by enumerating this set.

For every $i$, we count the number of the lines $\ell \in \calP_n$ whose leftmost
integer point is $(i, j)$ for some $0\leq j \leq i$. Thus $i$ runs through values 0,1,\ldots, $\lceil n/2 \rceil - 1$. Line through $(i, j)$ with slope $b/a$ (where $\gcd(a, b) = 1$) has equation
\[
y - j = \frac{b}{a}(x-i) \iff y = \frac{b}{a}x + \frac{aj-bi}{a},
\]
so that~\eqref{230820111329} can be expressed as
\[
-2bi \equiv -bn - 1 \pmod{a} \iff b(n-2i) \equiv -1 \pmod{a}. 
\]
This is soluble on $b$ if and only if $\gcd(a, n-2i) = 1$ in which case it
has a unique solution satisfying $0\leq b < a$ and $\gcd(b, a) = 1$. Notice also that $j$ is determined uniquely by the condition $0 \leq aj-bi < a$.

Since $\{i+ak \ \vert \ k \in \mathbb{Z}\} \cap [0, n]$ is the set of $x$-coordinates of all integer points of $\ell$ in the $n \times n$-grid, the point $(i, j)$ is the leftmost of at least two integer points if and only if
\[
i \leq a-1 \text{ and } i \leq n-a \iff i+1 \leq a \leq n-i.
\] 
Therefore, the number of lines in $\calP_{n}$ is
\[
\sum_{i = 0}^{\lceil n/2 \rceil-1} \sum_{\substack{a = i+1 \\ \gcd(a, n-2i) =1}}^{n-i} 1 =
\sum_{i=0}^{\lceil n/2 \rceil-1} \varphi(n-2i).
\]
Recalling that $\calP_{n}$ does not include  the line in $\calL_{n}$ for the word $\word{1}^{n}$, we obtain the attested formula~\eqref{05092011}.
\end{proof}

\section{Return words}\label{return}
Recall the definition of a return word from the end of Section \ref{se:intro}. To get hold of occurrences of a factor $u$ in a given word, we prove the following lemma.
\begin{lem}
\label{le:startpoint}
Let $0 \leq b < a$ with $\gcd(b,a) = 1$ and let $u$ be a finite word defined by the line
\begin{equation}
\label{eq:faceq}
y = \frac{b}{a} x + \frac{c'}{a} \quad \text{with $0 \leq x \leq |u|$}
\end{equation}
for some $c'$. Then there are integers $0 \leq c_1 \leq c_2 < a$ such that \eqref{eq:faceq} represents $u$ if and only if $c' \in [c_1, c_2] \pmod{a}$. Furthermore, of the lines representing $u$, only the one with $c' = c_2$ goes through a point $(x, y)$ with $x \in \{0, \dotsc, |u|\}$ and $y \equiv (a-1)/a \pmod{1}$.
\end{lem}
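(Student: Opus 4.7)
The plan is to analyze how the word $u(c')$, defined by the formula $a_k = \nfloor{(k+1)b/a + c'/a} - \nfloor{kb/a + c'/a}$, depends on $c'$ viewed as a point on the circle $\mathbb{R}/a\mathbb{Z}$. The key observation is that $u(c')$ is locally constant and can only change at the \emph{critical values} $c^* \equiv -kb \pmod{a}$ with $k \in \nsset{0, 1, \ldots, |u|}$---the values of $c'$ at which the line $y = bx/a + c'/a$ passes through an integer point $(k, \cdot)$ with $0 \leq k \leq |u|$. A direct computation with the floor formula shows that as $c'$ increases past $c^* \equiv -kb \pmod{a}$, letter $a_{k-1}$ flips from \word{0} to \word{1} (when $k \geq 1$) and letter $a_k$ flips from \word{1} to \word{0} (when $k \leq |u|-1$); moreover, by right-continuity of the floor, $u(c^*)$ agrees with $u$ just to the right of $c^*$. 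Therefore the set of $c'$ representing any fixed word is a union of half-open arcs of the form $[c^*, c^*_{\mathrm{next}})$ delimited by consecutive critical values.

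The crux of the proof, and the main obstacle, is to show that distinct such arcs always give distinct words. Suppose $c'_0$ and $c'_1$ on the circle represent the same word, and let $K \subseteq \nsset{0, 1, \ldots, |u|}$ be the set of indices whose critical values lie on the counterclockwise arc from $c'_0$ to $c'_1$. From the flip description, the net change in $a_i$ along this arc equals $[i+1 \in K] - [i \in K]$ (Iverson brackets), which must vanish for every $i \in \nsset{0, 1, \ldots, |u|-1}$. This forces $i \in K \iff i+1 \in K$, and chaining yields $K = \emptyset$ or $K = \nsset{0, 1, \ldots, |u|}$. In the first case, no critical value is crossed, so $c'_0$ and $c'_1$ lie in the same arc of the circle. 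In the second case, every critical value lies on the counterclockwise arc, so its complement is contained in a single arc between consecutive critical values, again placing $c'_0$ and $c'_1$ in the same arc. The extra subtlety when $|u|+1 > a$, namely that several indices $k$ may share a critical value, is handled automatically since such indices enter or leave $K$ simultaneously.

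Once the set of $c'$ representing $u$ is known to be a single half-open arc $[c^*, c^*_{\mathrm{next}}) \pmod{a}$, I would set $c_1 = c^*$ and $c_2 = c^*_{\mathrm{next}} - 1$, both integers in $[0, a)$. For the \emph{Furthermore} part, if $k_{\mathrm{next}} \in \nsset{0, 1, \ldots, |u|}$ is the index with $c^*_{\mathrm{next}} = -k_{\mathrm{next}} b \pmod{a}$, then $b k_{\mathrm{next}} + c_2 \equiv -1 \pmod{a}$, so the line at $c' = c_2$ passes through the point $(k_{\mathrm{next}}, y)$ with $y \equiv (a-1)/a \pmod{1}$. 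Conversely, any line representing $u$ that passes through some point $(x, y)$ with $x \in \nsset{0, \ldots, |u|}$ and $y \equiv (a-1)/a \pmod{1}$ satisfies $c' + 1 \equiv -bx \pmod{a}$, so $c' + 1$ is a critical value; since the only critical value in $[c^* + 1, c^*_{\mathrm{next}}] = [c_1 + 1, c_2 + 1]$ is $c^*_{\mathrm{next}}$, this forces $c' = c_2$.
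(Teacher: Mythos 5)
Your proof is correct, but it takes a genuinely different route from the paper's for the key ``contiguity'' step. The paper takes $c_1$ and $c_2$ to be the smallest and greatest residues in $\{0,\dots,a-1\}$ representing $u$ and argues geometrically: there can be no integer point with $x$-coordinate in $\{0,\dots,|u|\}$ between the lines with $c'=c_1$ and $c'=c_2$, since such a point would make the partial sums of the two words (both starting from floor value $0$ at $x=0$) differ; hence every intermediate $c'$ yields the same word, and the ``Furthermore'' part then follows from the observation that passing from $c'$ to $c'+1$ changes the word exactly when the new line acquires an integer point in range. You instead treat $c'\mapsto u(c')$ as a right-continuous, locally constant function on $\mathbb{R}/a\mathbb{Z}$, record the letter flips at the critical values $-kb \pmod a$, and rule out a nonempty proper crossed-index set $K$ via the telescoping identity $[i+1\in K]-[i\in K]=0$; this is more computational than the paper's squeeze argument, but it buys an explicit description of the constancy arcs as $[c^{*}, c^{*}_{\mathrm{next}})$, which makes the ``Furthermore'' assertion immediate rather than derived. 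Two small points you should make explicit: when you set $c_2=c^{*}_{\mathrm{next}}-1$ you must reduce modulo $a$ and justify $c_1\le c_2$, which holds because $0$ is itself a critical value (coming from $k=0$), so no constancy arc straddles $0$; and $K$ should be defined as the set of indices whose critical values lie in the half-open arc $(c'_0, c'_1]$, consistent with your right-continuity convention. Neither point requires a new idea, so there is no genuine gap.
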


\begin{proof}
Let $c_1$ be the smallest and $c_2$ the greatest $c' \in \{0, \dotsc, a-1\}$ such that \eqref{eq:faceq} represents $u$. Then there cannot be integer points between the lines \eqref{eq:faceq} with $c'=c_1$ and $c'=c_2$ since otherwise these two lines would represent different words. Hence \eqref{eq:faceq} represents $u$ also for any $c' \in [c_1, c_2]$.

For any $c$, the lines \eqref{eq:faceq} with $c' = c$ and $c' = c+1$ represent different words if and only if there is an integer point in the line with $c' = c+1$. This is equivalent to the line with $c' = c$ having a point $(x, y)$ with $x \in \{0, \dotsc, |u|\}$ and $y = (a-1)/a \pmod{1}$. Now the second assertion of the lemma follows from the first. 
\end{proof}

Let us now state and prove our final application, a new proof of the following theorem from \cite{JusVui2000, Vuillon2001}.
\begin{thm}
\label{th:rets}
Every Sturmian word has exactly two returns.
\end{thm}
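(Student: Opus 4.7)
The plan is to identify occurrences of $u$ in $\omega$ with visits of a circle rotation to a specific interval, and then to argue that this interval has exactly two first-return times.

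Let $\omega$ be an infinite Sturmian word defined by $y = \alpha x + \rho$ with $\alpha$ irrational, and let $u$ be a factor of length $n$. Writing $\rho_{k} = \nfpart{k\alpha + \rho}$, it follows directly from \eqref{eq:akdef} that the factor of $\omega$ beginning at position $k$ is itself defined by the line $y = \alpha x + \rho_{k}$. Hence $u$ occurs at position $k$ if and only if $\rho_{k}$ lies in
\[
I_{u} = \{\sigma \in (0,1) \colon y = \alpha x + \sigma \text{ defines } u \text{ on } [0,n]\}.
\]
Consequently, the return words of $u$ are in bijection with the distinct first-return times of the orbit $(\rho_{k})_{k \geq 0}$ to $I_{u}$ under the rotation $T_{\alpha} \colon x \mapsto \nfpart{x + \alpha}$, and the task reduces to showing that there are exactly two such first-return times.

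Applying Lemma~\ref{le:startpoint} (after extending $u$ via Lemma~\ref{le:redallints} if needed) together with a continuity-in-slope argument, I would show that $I_{u}$ is a nonempty open interval $(\sigma_{-},\sigma_{+})$ whose endpoints take the form $\sigma_{\pm} = \nfpart{-j_{\pm} \alpha}$ for two specific $j_{\pm} \in \{1,\ldots,n\}$ — those corresponding to the grid points at which the two extreme lines of slope $\alpha$ still defining $u$ touch the broken line. In particular, both endpoints of $I_{u}$ lie in the orbit of $0$ under $T_{\alpha}$.

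The main obstacle is to deduce from this that the first-return map $R \colon I_{u} \to I_{u}$ is an exchange of exactly two intervals, rather than the generic three. A careful tracking of the orbits of $\sigma_{\pm}$ under $T_{\alpha}$ should show that the preimages of the endpoints that first re-enter $I_{u}$ partition $I_{u}$ into exactly two subintervals, on which the first-return time takes two distinct values. These two subintervals then correspond to the two return words of $u$, completing the proof.
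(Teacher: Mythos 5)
Your reduction of occurrences of $u$ to visits of the orbit $\rho_k=\nfpart{k\alpha+\rho}$ to a set $I_u$ is sound, and it is essentially the continuous counterpart of what the paper does discretely (the paper's ``starts'' $(bi+c)/a \pmod 1$ landing in $[c_1/a,c_2/a]$, after using Lemma~\ref{le:redallints} to pass to a rational-slope line through all grid points). But the proposal stops exactly where the actual work begins. By the three-distance (Steinhaus) theorem, the first-return map of an irrational rotation to a generic interval is an exchange of \emph{three} intervals, with three return times; ruling out the third interval \emph{is} the content of Theorem~\ref{th:rets}, and you only assert that ``a careful tracking of the orbits of $\sigma_\pm$ should show'' it. Moreover, the property you propose to use --- that both endpoints of $I_u$ lie in the orbit of $0$ --- is not, as stated, the right hypothesis: what one needs is that the endpoints are consecutive points of the level-$n$ partition $\nsset{\nfpart{-k\alpha}\colon 0\leq k\leq n}$ with $n=|u|$, so that $|I_u|\equiv k\alpha \pmod 1$, and then a genuine argument (the classical criterion that the induced map of a rotation on such an interval is again a two-interval exchange, or a direct analysis in the spirit of the paper's ``move the line up by $1/a$'' step) to show the degeneration to two subintervals. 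Without that step the argument proves only the bound of three return words.

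There is a second, smaller gap: you claim a bijection between return words and first-return \emph{times}. A priori two distinct return words could have the same length, and conversely the word read along an orbit segment of fixed length need not be constant on a subinterval of constant return time; you must check that the return word is constant on each continuity interval of the induced map, i.e.\ that the iterates of these subintervals up to the return time never straddle the coding discontinuity. The paper's proof handles the analogous point explicitly: using Lemma~\ref{le:startpoint}, the change $\word{01}\to\word{10}$ when the line is raised by $1/a$ occurs only inside the occurrence of $u$ starting from $c_2/a$, which is what forces $v_{c_1}=\dotsb=v_k\neq v_{k+1}=\dotsb=v_{c_2}$. So your route (rotation induction) is viable and genuinely different in flavor from the paper's elementary discrete argument, but as written it defers both the key degeneration step and the word-constancy step, and the ``continuity-in-slope'' passage from the rational-slope Lemma~\ref{le:startpoint} to irrational $\alpha$ is likewise not carried out.
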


\begin{proof}
Let $\omega$ be a Sturmian word and let $u$ be a factor of $\omega$. If $u$ had only one return, then $\omega$ would be eventually periodic, which is a contradiction. Therefore it is enough to prove that $u$ has at most two returns.

Clearly it is enough to consider a finite Sturmian factor $w$, and by Lemma \ref{le:redallints} we can assume that $w$ goes through all integer points of the line $\ell$ for which $w \in m(\ell)$. The case $w = \word{1}^{n}$ is trivial, so assume that $w \neq \word{1}^{n}$. Now the line $\ell$ is defined by an equation
\[
y = \frac{b}{a} x + \frac{c}{a}, \quad \text{where $0 \leq b < a, 0 \leq c < a$
  and $\gcd(b,a) = 1$.}
\]
We say that a factor beginning from position $i$ \emph{starts from} $(bi + c)/a \pmod{1}$. So, for example, $w$ starts from $c/a$.

By Lemma \ref{le:startpoint}, there are integers $0 \leq c_1 \leq c_2 < a$ such that a factor
of length $|u|$ is $u$ if and only if it starts from a point in $[c_1/a, c_2/a] \pmod{1}$. For $i = c_1, c_1+1, \dotsc, c_2$, let $v_i$ be the return word corresponding to the occurrence of $u$ starting from $i/a$.

If we move the line $\ell$ upwards by $1/a$ and consider also the word $w_+$ corresponding to the moved line, the only difference to the word $w$ is at points where
\[
y \equiv \frac{a-1}{a} \pmod{1} \quad \text{moves to} \quad  y \equiv 1 \pmod{1},
\]
i.e., the line reaches an integer point (and there $01 \to 10$). By Lemma \ref{le:startpoint} this must happen in $u$ starting from
$c_2/a$ and only in that $u$.

Comparing $w$ and $w_+$, those $u$ starting from $i/a$ with $i < c_2$ change to $u$
starting from $(i+1)/a$, so that $v_{i} \to v_{i+1}$. By above $v_{i+1}$ equals $v_i$
unless $v_i$ starts or ends with $c_2/a$. Let this happen for $v_{c_2}$ and $v_k$. We get
\[
v_{c_1} = v_{c_1+1} = \dotsb = v_{k} \neq v_{k+1} = v_{k+2} = \dotsb =  v_{c_2},
\]
so that there can be at most two different $v_i$. 
\end{proof}

To illustrate the proof of Theorem~\ref{th:rets}, let us take $w= \word{10001001001000100}$ and $u=\word{100}$. 
The broken line of $w$ is given by 
\[
\ell : y = \frac{3}{10} x + \frac{7}{10},
\]
see Figure~\ref{070920111425}. The numbers above the line $\ell$ are numerators of the ``starts'' of factors of $w$.

 \begin{figure}[!h]
 \centering
 \begin{tikzpicture} [scale=0.5]
 \draw[step=1cm,gray,very thin] (0,0) grid (17, 7); 
\pgfmathparse{17*0.3 + 0.7}
\edef\y{\pgfmathresult} 
\draw (0,  0.7) -- (17, \y);
\draw[semithick] (0.0, 0.0) -- (1, 1) -- (4, 1) -- (5,2) -- (7, 2) -- (8,3) -- (10,3) -- (11, 4) -- (14,4) -- (15,5) -- (17,5);
 \foreach \x in {0,1,2,...,17}
 {	
 	\pgfmathtruncatemacro{\y}{mod(3*\x+7, 10)}
 	\draw (\x, 1.2) + (0, 0.3*\x) node {\small{\y}};
}
\filldraw (1, 1) circle (2pt);
\filldraw (11, 4) circle (2pt); 
 \draw[semithick] (0, -1.2) -- (0, -0.1);
 \draw (0.5,-0.5) node {\word{1}}; 
 \draw (1.5,-0.5) node {\word{0}};
 \draw (2.5,-0.5) node {\word{0}};
 \draw (3.5,-0.5) node {\word{0}};
 \draw (2 , -1.5) node {$v_{7}$};
 \draw[semithick] (4, -1.2) -- (4, -0.1);
 \draw (4.5,-0.5) node {\word{1}};
 \draw (5.5,-0.5) node {\word{0}};
 \draw (6.5,-0.5) node {\word{0}};
  \draw (5.5 , -1.5) node {$v_{9}$};
  \draw[semithick] (7, -1.2) -- (7, -0.1);
 \draw (7.5,-0.5) node {\word{1}};
 \draw (8.5,-0.5) node {\word{0}};
 \draw (9.5,-0.5) node {\word{0}};
  \draw (8.5 , -1.5) node {$v_{8}$};
  \draw[semithick] (10, -1.2) -- (10, -0.1);
 \draw (10.5,-0.5) node {\word{1}};
 \draw (11.5,-0.5) node {\word{0}}; 
 \draw (12.5,-0.5) node {\word{0}}; 
 \draw (13.5,-0.5) node {\word{0}};
  \draw (12 , -1.5) node {$v_{7}$};
  \draw[semithick] (14, -1.2) -- (14, -0.1);
 \draw (14.5,-0.5) node {\word{1}};
 \draw (15.5,-0.5) node {\word{0}};
 \draw (16.5,-0.5) node {\word{0}};
  \draw (15.5 , -1.5) node {$v_{9}$};

 \end{tikzpicture}
\caption{Line $\ell$ and the broken line of $w$.}\label{070920111425}
 \end{figure}
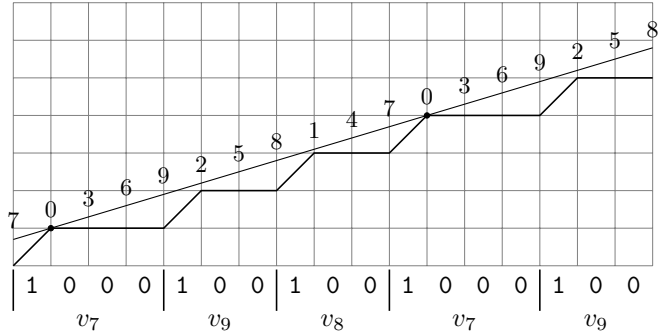

Now, the possible starting points for $u$ are $\frac{7}{10}$,  $\frac{8}{10}$, and $\frac{9}{10}$, so that $c_{1} = 7$ and $c_{2}=9$.
Next, by moving the line $\ell$ upwards by $\frac{1}{10}$, we reach the line given by $y = \frac{3}{10} x + \frac{8}{10}$; this is depicted in Figure~\ref{070920111429}.
In this transition, we see that $v_{7} \rightarrow v_{8}$ and $v_8 \rightarrow v_{9}$. Since $v_{7}$ ends with $\frac{c_{2}}{a} = \frac{9}{10}$, there is a change, so that $v_{7} \neq v_{8}$. The word $v_{8}$ does not start nor end end with $\frac{c_{2}}{a} = \frac{9}{10}$, so $v_{8} = v_{9}$.

 \begin{figure}[h!]
 \centering
 \begin{tikzpicture} [scale=0.5]
 \draw[step=1cm,gray,very thin] (0,0) grid (17, 7); 
\pgfmathparse{17*0.3 + 0.8}
\edef\y{\pgfmathresult} 
\draw (0,  0.8) -- (17, \y);
\draw[semithick] (0.0, 0.0) -- (1, 1) -- (3, 1) -- (4,2) -- (5,2) -- (7, 2) -- (8,3) -- (10,3) -- (11, 4) -- (13,4) -- (14,5) -- (15,5) -- (17,5);
 \foreach \x in {0,1,2,...,17}
 {	
 	\pgfmathtruncatemacro{\y}{mod(3*\x+8, 10)}
 	\draw (\x, 1.3) + (0, 0.3*\x) node {\small{\y}};
}
 \draw[semithick] (0, -1.2) -- (0, -0.1);
 \draw (0.5,-0.5) node {\word{1}}; 
 \draw (1.5,-0.5) node {\word{0}};
 \draw (2.5,-0.5) node {\word{0}};
  \draw (1.5 , -1.5) node {$v_{8}$};
  \draw[semithick] (3, -1.2) -- (3, -0.1);
 \draw (3.5,-0.5) node {\word{1}};
 \draw (4.5,-0.5) node {\word{0}};
 \draw (5.5,-0.5) node {\word{0}};
 \draw (6.5,-0.5) node {\word{0}};
  \draw (5 , -1.5) node {$v_{7}$};
  \draw[semithick] (7, -1.2) -- (7, -0.1);
 \draw (7.5,-0.5) node {\word{1}};
 \draw (8.5,-0.5) node {\word{0}};
 \draw (9.5,-0.5) node {\word{0}};
 \draw (8.5 , -1.5) node {$v_{9}$};
  \draw[semithick] (10, -1.2) -- (10, -0.1);
 \draw (10.5,-0.5) node {\word{1}};
 \draw (11.5,-0.5) node {\word{0}}; 
 \draw (12.5,-0.5) node {\word{0}};
 \draw (11.5 , -1.5) node {$v_{8}$};
 \draw[semithick] (13, -1.2) -- (13, -0.1); 
 \draw (13.5,-0.5) node {\word{1}};
 \draw (14.5,-0.5) node {\word{0}};
 \draw (15.5,-0.5) node {\word{0}};
 \draw (16.5,-0.5) node {\word{0}};
 \draw (15 , -1.5) node {$v_{7}$};

 \end{tikzpicture}
\caption{Moving the line upwards by $1/10$ reaching the line given by $y = \frac{3}{10} x + \frac{8}{10}$ and corresponding to the word $w_+ = \word{10010001001001000}$ .}\label{070920111429}
 \end{figure}
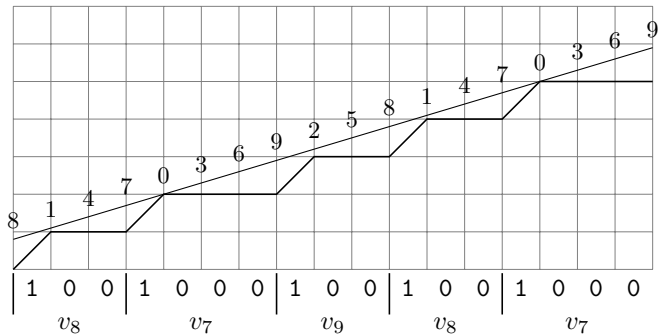

\section{Acknowledgements}
We thank the anonymous referees for helpful comments on the exposition of the paper and for pointing out the reference~\cite{Lipatov1982}.


\begin{thebibliography}{99}
\bibitem{BerPoc1993}
J. Berstel and M. Pocchiola. A geometric proof of the enumeration formula for Sturmian words, \emph{Internat. J. Algebra Comput.}~\textbf{3}:349--355, 1993.
\bibitem{DulGou1990}
S. Dulucq and D. Gouyou-Beauchamps. Sur les facteurs des suites de Sturm, \emph{Theoret. Comput. Sci.}~\textbf{71}:381--400, 1990.
\bibitem{JusVui2000} J. Justin and L. Vuillon. Return words in Sturmian and episturmian words, \emph{Theor. Inform. Appl.}~\textbf{34}(5):343--356, 2000. 
\bibitem{HarWri1960}
G. H. Hardy and E. M. Wright. \emph{An Introduction to the Theory of Numbers}, fourth ed., Clarendon Press, Oxford, 1960.
\bibitem{Lipatov1982}
E. P. Lipatov. A classification of binary collections and properties of homogeneity classes, \emph{Problemy Kibernet.}~\textbf{39}: 67--84, 1982.
\bibitem{Lothaire2002}
M. Lothaire. \emph{Algebraic Combinatorics on Words}. Vol. 90 of \emph{Encyclopedia of Mathematics and its Applications}. Cambridge University Press, 2002.
\bibitem{deLDeL2006}
A. de Luca and A. De Luca. Combinatorial properties of Sturmian palindromes, \emph{Internat. J. Found. Comput. Sci.}~\textbf{17}(3):557--573, 2006.
\bibitem{deLMig1994}
A. de Luca and F. Mignosi. Some combinatorial properties of Sturmian words, \emph{Theoret. Comput. Sci.}~\textbf{136}:361--385, 1994.
\bibitem{Mignosi1991}
F. Mignosi. On the number of factors of Sturmian words, \emph{Theoret. Comput. Sci.}~\textbf{82}:71--84, 1991.
\bibitem{Vuillon2001} L. Vuillon. A characterization of Sturmian words by return words, \emph{European J. Combin.}~\textbf{22}:263--275, 2001.
\end{thebibliography}
\end{document}